\documentclass{llncs}
\usepackage{llncsdoc}
\usepackage{algorithm}
\usepackage{algorithmicx}
\usepackage{algpseudocode}
\usepackage{graphicx}
\usepackage{epstopdf}
\usepackage{array}
\usepackage{amsmath}
\usepackage{amssymb}

\floatname{algorithm}{Algorithm}

\usepackage[lofdepth,lotdepth]{subfig}

\begin{document}
\title{Online Variant of Parcel Allocation in Last-mile Delivery}


\author{Yuan Liang \inst{1} \and Xiuguo Bao \inst{2}}
\institute{State Key Laboratory of Software Development Environment, School of Computer Science, Beihang University \and National Computer Network Emergency Response Technical Team/Coordination Center of China}

\maketitle

\begin{abstract}

We investigate the problem of last-mile delivery, where a large pool of citizen crowd-workers are hired to perform a variety of location-specific urban logistics parcel delivering tasks. Current approaches focus on offline scenarios, where all the spatio temporal information of parcels and workers are given. However, the offline scenarios can be impractical since parcels and workers appear dynamically in real applications, and their information is not known in advance. In this paper, in order to solve the shortcomings of the offline setting, we first formalize the online parcel allocation in last-mile delivery problem, in which all parcels were put in pop-stations in advance, while workers arrive dynamically. Then we propose an algorithm which provides theoretical guarantee for the parcel allocation in last-mile delivery. Finally, we verify the effectiveness and efficiency of the proposed method through extensive experiments on real and synthetic datasets.

\end{abstract}

\section{Introduction}
Last-mile delivery in urban logistics, where citizen volunteers are incentivized to deliver location-specific parcels, has recently attracted strong commercial interest \cite{2014traccs,2016_last-mile}. In real-life scenario, companies take the goods to the high-capacity warehouses (pop-stations), then the consumers will be notified and collect their parcels via mobile applications. If a parcel is not collected within 3 days, it is considered as a failed delivery. 

Therefore, in order to deal with unattended parcels, we utilize the power of crowd workers to enhance the last-mile delivery. In particular, crowd workers can take parcels from pop-stations to consumers. For example, the Yongjia of Beihang University is a pop-station, and many parcels will be collected from the pop-station, consumers will be noticed to collect their parcels by mobile applications. However, many parcels can not be delivered to consumers as consumers are not present when deliveries are made. This situation can be dealt with if there are some crowd-workers who can help consumers to collect parcels at pop-stations and deliver them to consumers. 
Recall the example of Beihang University, each student can become a crowd worker and collect parcels from Yongjia, and delivery them to consumers. As shown in Fig.~\ref{fig:parcels}, all parcels will be put in pop-stations, denoted by $p_1,p_2,...,p_n$, and all workers are denoted by $w_1,w_2,...,w_m$, parcels will be delivered to consumers by crowd-workers.

\begin{figure}[t]
\centering
\includegraphics[width = 0.8\textwidth]{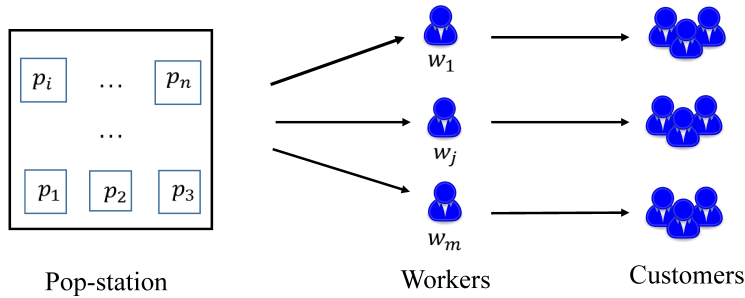}
\caption{The pattern of assign parcels to workers}
\label{fig:parcels}
\end{figure}

In this paper, we mainly discuss the problem of online parcel allocation in last-mile delivery, in which all parcels are put in the pop-station in advance, while crowd-workers use different time period to collect parcels, and each worker has different schedule that they will arrive dynamically. And crowd workers are rewarded with a certain amount of money 
according to their speed and the distance of the parcels. In addition, each worker has different working hours to collect different parcels within a day.  For example, Alice and Bob are crowd workers in a community, and there is a pop-station in their community, Alice use part-time to collect parcels from the pop-station to customers, and Bob is a full-time crowd worker. Therefore, crowdsourcing platform in order to pursuit the maximum benefit that they will give a reasonable parcel allocation. 
 
As introduced in \cite{2016_last-mile}, under offline scenarios, the parcel allocation problem in last-mile delivery can be solved by being reduced to the problem of min-cost flow model \cite{ahuja1993network}, where the source node $s$ is the only surplus and the sink node $t$ is the only demand node, and the remaining nodes represent parcels and crowd-workers, which serve as transshipment nodes. However, the reduction can not be performed under online scenarios, the offline solution becomes infeasible since the arrival orders of workers are unknown in dynamic environments. Therefore, it is necessary for us to propose an algorithm to solve online parcel allocation problem in last-mile delivery. To further illustrate this motivation, we go through a toy example as follows.

{\it Example 1}: Suppose in a social community, we have 8 parcels $p_1 \sim p_8$ and 4 workers $w_1 \sim w_4$ on a crowd-tasking platform, and each worker has a capacity, which is the maximum number of tasks that can be assigned to him/her. In this example, $w_1 \sim w_4$ have capacities of 2, 4, 3 and 2, respectively (in brackets). Table~\ref{table:utility} presents the utility (i.e. money) values between each pair of task and worker, which not only depends on the weight and distance of parcels also depends on the speed workers and the means of transportion of workers. Under the offline scenarios, we are able to derive all information of the parcels and workers, and the allocation is $\langle p_1, w_1 \rangle$, $\langle p_2, w_4 \rangle$,$\langle p_3, w_2 \rangle$, $\langle p_4, w_1 \rangle$, $\langle p_5, w_4 \rangle$, $\langle p_6, w_2 \rangle$, $\langle p_7, w_2 \rangle$, and $\langle p_8, w_3 \rangle$. However, in dynamic online scenarios, the information of workers  is not known in advance, and the allocation heavily depends on the order of crowd-workers. Suppose there are two orders of the crowd-workers, the first order is $w_2,w_3,w_4,w_1$, and the second order is $w_1,w_2,w_4,w_3$. For the first order, when $w_2$ arrives, $w_2$ will collect $p_3,p_4,p_6,p_7$; when $w_3$ arrives, $w_3$ will collect $p_2,p_5$; when $w_4$ arrives, $w_4$ will collect $p_4$; when $w_1$ arrives, $w_1$ will collect $p_1$. For the second order, when $w_1$ arrives, $w_1$ will collect $p_1,p_4$; when $w_2$ arrives, $w_2$ will collect $p_3,p_6,p_7$; when $w_4$ arrives, $w_4$ will collect $p_5$; finally, $w_3$ arrives, $w_3$ will collect $p_2$. Therefore, the online setting heavily depends on the order of crowd-workers.

\begin{table}[t]
\centering
\caption{Utility between parcels and workers}
\begin{tabular}{|p{1.08cm}<{\centering}|p{1.08cm}<{\centering}|p{1.08cm}<{\centering}|p{1.08cm}<{\centering}|p{1.08cm}<{\centering}|p{1.08cm}<{\centering}|p{1.08cm}<{\centering}|p{1.08cm}<{\centering}|p{1.08cm}<{\centering}|}
\hline
 & $p_1$ & $p_2$ & $p_3$ & $p_4$ & $p_5$ & $p_6$ & $p_7$ & $p_8$ \\
\hline
 $w_1 (2)$ & 0.9 & 0.4 & 0.5 & 0.9 & 0.4 & 0.8 & 0.3 & 0.9  \\
 \hline
 $w_2 (4)$ & 0.2 & 0.2 & 0.6 & 0.3 & 0.2 & 0.9 & 0.8 & 0.3  \\
 \hline
 $w_3 (3)$ & 0.4 & 0.5 & 0.2 & 0.4 & 0.7 & 0.2 & 0.2 & 0.7  \\
 \hline
 $w_4 (2)$ & 0.3 & 0.6 & 0.4 & 0.6 & 0.9 & 0.4 & 0.9 & 0.2  \\
 \hline 
 \end{tabular}
\label{table:utility}
\end{table}

To the best of our knowledge, this is the first work that studies the online scenario in last-mile delivery problem. It is therefore crucial to design an efficient and effective on-line algorithm dedicated for the last-mile delivery. In this paper, we make the following contributions:

a). We identify a new online scenario of parcel allocation in last-mile delivery model.

b). We propose an algorithm with slight power, whose competitive ratio is $\frac{1}{2(1+\lfloor \log(\mu)\rfloor)}$, where $\mu \geq 1$.

c). We conduct extensive experiments on real and synthetic datasets to evaluate the efficiency and effectiveness of our proposed algorithms.

The rest of the paper is organized as follows. We present the problem definition in Section 2. In Section 3, we review related works. The algorithms of our problem and the theoretical analysis is proposed in Section 4. We conduct experiments in Section 5 to evaluate the performance of our proposed solutions. The conclusion of our work is given in Section 6.

\section{Problem Statement}

We first introduce several concepts and then formally define the dynamic online scenario in last-mile delivery.

{\it Definition 1 (Parcel)}: All parcels are located in pop-stations, and there are $n$ parcels, denoted by $P = \{p_1,p_2,...,p_n\}$. Each parcel has a utility for each worker, denoted by $p_{ij}$. 

{\it Definition 2 (Worker)}: A crowd worker ("worker" for short), denoted by $w = <T_j, c_j, t_{ij}>$, where $T_j$ represents the time of worker $j$ collect parcels within a day. In addition, capacity $c_j$ is the maximum number of parcels that worker $j$ intends to collect, and $t_{ij}$ denotes the time of worker $j$ delivery parcel $i$. And there are $m$ workers, denoted by $W = \{w_1,w_2,...,w_m\}$. 

We then define the utility value that parcels are allocated workers as follows.

{\it Definition 3 (Utility value)}: The utility value that a worker $j$ perform a parcel $i$ is measured by $U(i,j) = p_{ij} x_{ij}$.

where $p_{ij}$ denotes the utility of parcel $i$ is collected by worker $j$. Let $x_{ij} = 1$ denote parcel $i$ is assigned to worker $j$ and $x_{ij} = 0$ for the opposite case. Our goal is to maximize that

\begin{equation}
\sum_i \sum_j p_{ij} x_{ij} 
\end{equation}
subject to
\begin{eqnarray}
\sum_j x_{ij} & \leq & 1   \\
\sum_i x_{ij} & \leq & c_j  \\
\sum_i t_{ij} x_{ij} & \leq & T_j
\end{eqnarray}

Constraint (2) means that each parcel must be assigned to no more than one worker. Constraint (3) restricts a worker has a capacity, which is the maximum number of parcels that can be assigned to him/her. Constraint (4) means that each worker has different working hours in a day (i.e. some workers are part-time, and some workers are full-time), where $t_{ij}$ denotes the time of worker $j$ collect parcel $i$, and $T_j$ represents the working hours of worker $j$ within a day. 

The online algorithm has another constraint that once a parcel $p$ is allocated to a worker $w$, the allocation of $(p,w)$ cannot be changed. And the performance of online algorithm is usually compared with the optimal allocation of the offline scenario and heavily depends on the arriving orders of crowd-workers. Moreover, we evaluate the online algorithm using the notion of competitive ratio, which is a lower bound on the ratio between the utility of the algorithm and the utility of the optimal offline algorithm (all information of all parcels and workers are known in advance). For example, a competitive ratio of 1/2 would imply that an algorithm always achieves a utility that is at least half as good as optimal. We say an algorithm is $\alpha-competitive$ if its competitive ratio is at least $\alpha$.

The optimal solution of the offline problem is introduced as follows.

We introduce an optimal offline parcel allocation in last-mile delivery, we can reduce our offline problem to the min-cost flow problem. We first construct a flow network $G = (V,E)$, where $V = P \cup W \cup \{s,t\}$, $s$ is a source node and $t$ is a sink node. In flow network, there are three types of arcs, denoted by $s \to p_i$, $p_i \to w_j$ and $w_j \to t$ respectively, and the network is shown in Fig.~\ref{fig:flow_net}. The first type arcs $s \to p_i$ represents the source nodes to parcels, with capacity 1 and the cost of this type arcs is 0; the second type arcs $p_i \to w_j$ is from parcels to workers, with capacity 1 and the cost of this type arcs is $U'$, where $U'(p,w) = \sum_i \sum_j (\rho - p_{ij})$, $\rho = \max p_{ij} + 1$; The third type arcs $w_j \to t$ is from workers to sink node, with capacity $c_j$ is the maximum number parcels of worker $w_j$ can collect. Then, we use existing flow algorithms to obtain the optimal offline parcel allocation in last-mile delivery, e.g., simplex algorithm \cite{kiraly2012}, to calculate the optimal utility value.

\begin{figure}[t]
\centering
\includegraphics[width = 0.8\textwidth]{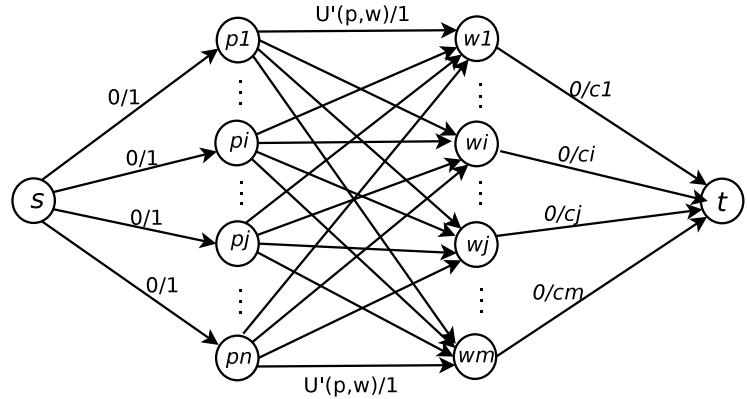}
\caption{Min-cost flow model}
\label{fig:flow_net}
\end{figure}

\section{Related Work}
In this section, we review related works from two categories, spatial crowdsourcing and last-mile delivery.

\subsection{Mobile crowdsourcing}

In recent years, with the development of mobile Internet and distributed systems \cite{tongwww16tracking,tongjos17}, more and more applications of mobile crowdsourcing are emerging, e.g. Uber, Gigwalk, etc. Particularly, task assignment is one of the major topics on mobile crowdsourcing. \cite{zheng2015qasca,fan2015icrowd} introduced task assignment problem in offline setting by learning the quality of crowd workers. 

In addition, there have been several researches on online allocation, such as \cite{2012_online_matching,ICDE16_Tong,tongpvldb16}. \cite{ICDE16_Tong} propose a general bipartite-matching-based framework to address the dynamic task allocation in online spatial crowdsourcing platforms. Moreover, \cite{ho2012online} studies online task assignment, where tasks arrival while workers are not dynamic. And\cite{zhangpvldb14} introduced that minimize the total waiting time before the crowd workers arrives at a specific location of tasks.　Although the aforementioned works study the various problem on online allocation, most of them do not address the last-mile delivery.

\subsection{Last-mile delivery}

In recent years, there have been several researches in the last-mile delivery\cite{2016_last-mile,2014traccs}. \cite{2016_last-mile} introduced the planning and scheduling parcels so as to minimize the additional travelling cost. They propose an effective large-scale mobile crowd-tasking model in which a number workers deliver a variety of location-specific urban logistics parcels. However, they do not address the parcel allocation problem for workers that arrive dynamically, and are thus impractical for dynamic environments.

In addition, there are many approaches which focus on a particular class of the workers' utility. For example, \cite{shahabi2012} introduced a centralized allocation algorithm that maximize allocated tasks, while satisfying a set of constraints. \cite{he2014toward} consider that tasks with different requirements of quality of sensing are typically associated with specific locations and mobile users are constrained by time budgets. Moreover,  \cite{2014traccs} study the task allocation as an optimization problem with the objective function that maximizing the total utility from all assigned tasks. Although these works study the task allocation problem on mobile crowdsourcing, they rarely consider the challenge of the last-mile delivery and dynamic scenario.

\section{Solution}

In this section, we propose a baseline algorithm and a primal-dual algorithm to solve the parcel allocation in last-mile delivery. In our problem, all parcels were taken to pop-station, when a crowd-worker $j$ arrives, s/he can carry at most a set of parcels $c_j$. And when a worker arrives, parcels will be allocated to the worker immediately, and the allocation can not be changed. 

\subsection{Baseline Algorithm }

In this subsection, we present a greedy as baseline algorithm. The main idea of greedy is to set all parcels in pop-station, and crowd-workers arrive dy dynamically. Note that when a crowd-worker $j$ arrives, s/he would carry at most $c_j$ parcels once. In particular, each crowd-worker will select the parcels with the highest utility that satisfies all constraints.

More specifically, let $m$ is the number of workers, $n$ is the number of parcels, and $M = \{(i,j)|i \in P, j \in W, i \geq j\}$ is the allocation. When a crowd-worker $j$ arrives, s/he has a capacity constraint $c_j$. Let $S(j)$ represents the set of parcels of worker $j$ will collect. When worker arrives, s/he will repeat this process until there are no any parcels and thereby completes the allocation. 

\begin{algorithm}[!t]
\caption{Greedy}
\begin{algorithmic}[1]
\Require $P, W, U(,.,)$
\Ensure A feasible allocation $M$
\State $S(w)\gets \emptyset$ for all $w$.
\For {each parcel $p_i$}
\State buile a list $q_i$ sorted in ascending order of $U(p_i,w_j)$ for all workers
\EndFor
\For {each new arrival worker $w$}
\State Let $S(p)$ denote the set of parcels are not allocated
\If {$S(p) \neq \emptyset$}
\State $S(w) \gets $the worker collect a set of parcels with highest utility that satisfies all constraints
\State $S(p) \gets S(p) - S(w)$
\Else
\State break;
\EndIf
\EndFor
\State \Return{the final allocation $M$ and the total utility $U$.} 
\end{algorithmic}
\label{algo:greedy}
\end{algorithm}

Algorithm~\ref{algo:greedy} illustrates the procedure. Line 1 initializes $S(w)$ for each worker. Here, $S(w)$ denotes worker $w$ collect a set of parcels at most $c_w$. In lines 2-4, we build a list $q_i$ sorted in ascending order of $U(p_i,w_j)$ for all workers such that when a worker arrives, parcels will quickly be allocated to the worker. In lines 5-13, we iteratively process each new arrival worker. Particularly, we adopt a greedy strategy on the arrival worker. In addition, let $S(p)$ denotes the set of parcels are not allocated in line 6. In lines 7-12, if $S(p) \neq \emptyset$, the arrival worker will collect a set of parcels with highest utility that satisfies all constraints. The capacity of arrival worker $j$ at most $c_j$. Finally, we will obtain the maximum utility $U$ and the final allocation $M$ accordingly in line 14.

\begin{example}
Here is the process of running Greedy algorithm on Example 1. First, we set the utility of workers and parcels in TABLE~\ref{table:utility}. Suppose the order of workers is $w_2, w_4, w_3, w_1$, when worker $w_2$ arrives, s/he will collect parcels $p_3, p_4, p_6, p_7$ that satisfies her/his capacity and obtain the maximum utility. When worker $w_4$ arrives, s/he will collect $p_2, p_5$, the utility of worker $w_4$ is $0.6 + 0.9 = 1.5$. When worker $w_3$ arrives, s/he will collect $p_8$ that satisfies $w_3$ constraints. When $w_1$ arrives, there is only a parcel $p_1$ is not collected, and $w_1$ will collect $p_1$, the utility of $w_1,p_1$ is 0.9. Finally, the online parcel allocation is $\langle p_1,w_1 \rangle, \langle p_2,w_4 \rangle, \langle p_3,w_2 \rangle, \langle p_4,w_2 \rangle, \langle p_5,w_4 \rangle, \langle p_6,w_2 \rangle, \langle p_7,w_2 \rangle, \langle p_8,w_3 \rangle$. 

\end{example}

{\bf Complexity analysis}. In lines 2-4, we build an inverted list for each parcel $p_i$, the elements in the list are in form of $p(i,j)$, and the time complexity of Greedy is $O(|W||P|)$. For each of the new arrival worker, the time complexity of Greedy is $O(\log(|W||P|))$. Therefore, the time complexity of Greedy is $O(|W||P|+\log(|W||P|))$.

\subsection{Primal-dual Algorithm}

Inspired by online primal-dual algorithm \cite{primal-dual2009,2005online,2013unified}, we propose a primal-dual algorithm to solve parcel allocation problem. We first give the duality of the problem, and the standard primal-dual linear programming formulation of our problem are shown as follows.

\begin{equation}
maximize  \sum_i \sum_j p_{ij}  x_{ij} 
\end{equation}
subject to
\begin{eqnarray}
\sum_j x_{ij}  & \leq & 1, j \in W     \\
\sum_i x_{ij} & \leq &  c_j, i \in P   \\
\sum_i t_{ij} x_{ij}  & \leq & T_j, (i,j) \in M
\end{eqnarray}

According to \cite{primal-dual}, and the derivations of its dual are shown as follows.

{\bf{Step 1}}. Rewrite the objective (5) as a minimization.

\begin{equation}
minimize  -\sum_i \sum_j p_{ij} x_{ij} 
\end{equation}

{\bf{Step 2}}. Rewrite each inequality constraint as a "less than or equal", and rearrange each constraint so that the right-hand side is 0.

After this step our linear program now looks as follows.

\begin{equation}
minimize  -\sum_i \sum_j p_{ij} x_{ij} 
\end{equation}
subject to
\begin{eqnarray}
\sum_j x_{ij}  - 1 & \leq & 0, j \in W     \\
\sum_i x_{ij} -  c_j & \leq & 0, i \in P   \\
\sum_i t_{ij} x_{ij}  - T_j & \leq & 0, (i,j) \in M
\end{eqnarray}

{\bf{Step 3}}. Define a non-negative dual variable for each inequality constraint. To constraint (12) and (13) we associate variable $\alpha_i \geq 0$, and to constraint (11), we associate variable $\beta_j \geq 0$.

{\bf{Step 4}}. For each constraint, eliminate the constraint and add $\alpha_i$, $\beta_j$ to the objective. Concretely, for the constraints (12) and (13) we would remove it and add the following term to the objective.

\begin{equation}
\alpha_i(\sum_i x_{ij} -  c_j + \sum_i  t_{ij} x_{ij} - T_j)
\end{equation}

If we do this for each constraint, and maximize the result over the dual variables, we get

$$max_{\alpha_i \geq 0, \beta_j \geq 0} min_{x_{i,j}\geq 0, p_{i,j} \geq 0} -\sum_i \sum_j p_{ij} x_{ij} $$

\begin{eqnarray}
& + & \alpha_i ( \sum_i t_{ij} x_{ij}   - T_j + \sum_i x_{ij} -  c_j)  +  \beta_j ( \sum_j x_{ij}  - 1 )
\end{eqnarray}

{\bf{Step 5}}. Rewrite the objective. And we get

$$max_{\alpha_i \geq 0, \beta_j \geq 0} min_{x_{i,j}\geq 0, p_{i,j} \geq 0} -\alpha_i(T_j + c_j) - \beta_j $$

\begin{eqnarray}
& + & \alpha_i ( \sum_i t_{ij} x_{ij}  + \sum_i x_{ij})    +  \beta_j \sum_j x_{ij}   -  \sum_i \sum_j p_{ij} x_{ij}
\end{eqnarray}

{\bf{Step 6}}. Remove $\alpha_i$ and $\beta_j$ with constraints.

{\bf{Step 7}}. Finally, rewrite the result of the step 6 as a minimization. And the dual is shown as follows.

\begin{equation}
minimize  \sum_i \alpha_i ( T_j + c_j ) + \sum_j \beta_j 
\end{equation}
subject to
\begin{eqnarray}
\alpha_i ( T_j + c_j ) + \beta_j  & \geq &   p_{ij}, (i,j) \in M  \nonumber \\
\alpha_i  & \geq &  0, i \in P   \nonumber \\
\beta_j & \geq &  0, j \in W
\end{eqnarray}

\begin{algorithm}[t]
\caption{Primal-dual}
\begin{algorithmic}[1]
\State initially $\forall x_{ij} \gets 0 $
\For {Upon arrival of a new worker $j$ allocate a set of parcels to the worker $j$ that maximizes $\sum_i p_{ij}x_{ij}$}
\If {$x_{ij} = 0$}
\State Calculate the utility of worker $j$ and its remaining time $T_j - \sum_i t_{ij}x_{ij}$ and set $x_{ij} = 1$
\State $\alpha_i \gets t_{ij}(1-x_{ij})$
\State $\beta_j \gets p_{ij} (1 - x_{ij}) + x_{ij}$
\Else
\State break;
\EndIf
\EndFor
\State \Return{the final allocation $M$ and the total utility $U$.} 
\end{algorithmic}
\label{algo:primal-dual}
\end{algorithm}

First, we tempting to design a online primal-dual algorithm (i.e., in Algorithm~\ref{algo:primal-dual}) that achieves a constant worst-case competitive ratio, then we found that no algorithm can be constant-competitive via the following theorem.

\begin{theorem}
Suppose $\mu \geq 1$, consider instances of our problem when $\forall (i,j) \in M: t_{ij} \leq T_j \leq \mu t_{ij}$. Then worst-case competitive-ratio of last-mile delivery on these instances is $\frac{1}{2(1+\lfloor \log(\mu)\rfloor)}$.
\end{theorem}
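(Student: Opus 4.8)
The plan is to read the statement as a tight characterization of the best competitive ratio achievable by any online algorithm, and to attack the impossibility direction first, since it is exactly what rules out a constant-competitive algorithm. Write $k = 1 + \lfloor \log \mu \rfloor$. The hypothesis $t_{ij} \le T_j \le \mu t_{ij}$ forces every feasible ratio $T_j / t_{ij}$ into $[1,\mu]$, which I would cover by the $k$ geometric buckets $B_\ell = [2^\ell, 2^{\ell+1})$ for $\ell = 0, \dots, \lfloor \log \mu \rfloor$; since $2^{\lfloor \log \mu \rfloor} \le \mu < 2^{\lfloor \log \mu \rfloor + 1}$ these exactly tile the admissible range. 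A worker whose pairs land in bucket $B_\ell$ can carry on the order of $2^\ell$ parcels before its time budget $T_j$ is exhausted, so $k$ is precisely the number of distinct densities the adversary can exploit. The two factors in the denominator should come from two independent sources of loss: the factor $k$ from the spread of densities, and the factor $2$ from the irrevocability of online assignment within a single density class.

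For the $k$ factor I would build the adversarial instance in $k$ stages. In stage $\ell$ I release a worker together with a batch of parcels calibrated so that its time budget admits about $2^\ell$ of them, with utilities scaled geometrically across stages. The construction is designed so that any commitment an online algorithm makes in an early, low-density stage consumes capacity or time that, by constraints (6)--(8) together with irrevocability, it can no longer redirect once the denser and higher-utility later stages appear. The offline optimum, seeing the whole sequence, packs its assignment into the densest stage and collects a utility larger by a factor $\Theta(k)$. Making the geometric scaling self-similar across stages, so that the relative loss is the same in every stage regardless of where the algorithm chose to commit, is the crux of forcing the full $k$ factor rather than a constant.

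For the factor $2$ I would restrict attention to a single bucket, where all $t_{ij}$ agree up to a factor of two and the time constraint degenerates essentially to the capacity constraint $\sum_i x_{ij} \le c_j$. The problem then reduces to online $b$-matching, for which the classical obstruction applies: present a worker two equally-timed parcels, force an irrevocable commitment to one, then reveal a later worker for whom the un-taken parcel has much higher utility, so the algorithm keeps at most half of what the optimum collects. Interleaving this within-bucket obstruction with the across-bucket staging, so that in each stage the algorithm simultaneously loses half its utility to the matching argument and a geometric factor to the density argument, should yield an instance on which every online algorithm earns at most $\frac{1}{2k}$ of the optimum, which is the claimed bound.

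Finally, to justify the word \emph{is} (tightness) I would exhibit a matching algorithm: route each arriving pair to its bucket $\ell = \lfloor \log(T_j / t_{ij}) \rfloor$, run within each bucket an online matching rule that is $\tfrac12$-competitive against the optimum restricted to that bucket, and combine across the $k$ buckets by committing, via a suitable randomized choice, to a single bucket. Since the offline optimum decomposes additively over the buckets, its expected per-bucket value is $\mathrm{OPT}/k$, and the within-bucket guarantee recovers $\frac{1}{2k}\,\mathrm{OPT}$ in expectation. I expect the main obstacle to be the lower-bound construction: getting the density staging and the matching obstruction to compound so that the losses \emph{multiply} to $2k$ rather than merely add, while verifying that the instance respects $t_{ij} \le T_j \le \mu t_{ij}$ at every stage and defeats adaptive online algorithms (and, if the claim is meant to include them, randomized ones as well).
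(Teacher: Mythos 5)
Your final paragraph --- bucket each pair by $\lfloor \log(T_j/t_{ij}) \rfloor$, commit via a randomized choice to one of the $1+\lfloor\log\mu\rfloor$ buckets, and pay a further factor $2$ inside the chosen bucket --- is in substance the paper's entire proof. The paper partitions the allocation into $M_0,\dots,M_{\lfloor\log\mu\rfloor}$ with $(i,j)\in M_s \Leftrightarrow \lfloor\log(T_j/t_{ij})\rfloor = s$, conditions on a (uniformly) random choice of $s$, and uses the additive decomposition $OPT=\sum_s \sum_{(i,j)\in M_s} p_{ij}x^*_{ij}$ to conclude $E\{\sum_{(i,j)\in M} p_{ij}x_{ij}\} \ge \frac{1}{1+\lfloor\log\mu\rfloor}\cdot\frac12\cdot OPT$. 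One detail differs, and it matters: the paper's factor $2$ comes from rounding the time budget down to $\widetilde{T_j}=2^s t_{ij}$ (which satisfies $\widetilde{T_j}\ge \frac12 T_j$ because $T_j/t_{ij}<2^{s+1}$ in bucket $s$), not from a half-competitive online matching subroutine. Your within-bucket premise --- that edge-weighted online $b$-matching admits a $\frac12$-competitive rule --- is false without extra assumptions such as free disposal: after any irrevocable commitment the adversary can reveal a later worker whose utility for the already-taken parcel is arbitrarily larger, so no deterministic rule is constant-competitive inside a bucket. The paper avoids leaning on such a subroutine by comparing against the bucket-restricted optimum with rounded budgets; if you want your version of the achievability argument to go through, you must either adopt the same rounding device or state the free-disposal assumption explicitly.

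The impossibility half, on which you spend most of your effort, is simply not in the paper: despite the sentences ``no algorithm can be constant-competitive'' before and after the theorem, the written proof contains only the positive randomized-bucketing argument (the opening construction with $m=2^k-1$ parcels partitioned into $k$ sets is introduced and then never used adversarially). So nothing you need to reproduce the paper's result depends on that construction. If you nevertheless want to prove the statement as the tight characterization its wording (``is'') suggests, the gap you yourself flagged is genuine: staging the density classes and the matching trap makes the two losses coexist, but a naive composition yields a loss like $\max(2,k)$ or $k+1$ rather than the product $2k$; you would need the geometric scaling arranged so that every stage embeds its own factor-$2$ trap at the same relative scale, and to rule out randomized algorithms you would additionally need Yao's principle applied to a distribution over such instances. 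Neither step is routine, and neither appears in the paper.
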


\begin{proof}
Suppose there are a set of parcels in the pop-station, and let $m = 2^k -1$ parcels, for $k \in \mathbb{N}$. The parcels partitioned into $k$ sets, denoted $K = \{0,1,...,k-1\}$, and the set $t$ has $s^t$ parcels. When a worker $j$ arrives, s/he will collect $c_j$ parcels. When algorithm runs on this input, let 

\begin{equation}
T_j =  \sum_{i \in K} t_{ij}x_{ij} 
\end{equation}

Note that due to feasibility of the algorithm, $t_{ij} \leq T_j$. Now, run the algorithm on the input with the parameter $0 \leq s \leq k-1$. Obviously, the output of the algorithm is feasible. To show the competitive ratio, partition the allocation set $M$ into $M_0,M_1,...,M_{\log(\mu)}$ such that 

\begin{equation}
(i,j) \in M_s   \Longleftrightarrow \lfloor \log(\frac{T_j}{t_{ij}}) \rfloor= s
\end{equation}

Furthermore, suppose $\{ x_{i,j} \}$ is the algorithm's allocation and $\{x_{ij}^*\}$ is the optimal offline allocation of the last-mile delivery. Hence, $OPT = \sum_{(ij) \in M} p_{ij}x_{ij}^*$. Now, from $M_s$, we have 

\begin{equation}
(i,j) \in M_s   \Longleftrightarrow s \leq \log(\frac{T_j}{t_{ij}}) < s+1 \Longleftrightarrow 2^s \leq \frac{T_j}{t_{ij}} < 2 ^{s+1}
\end{equation}

Let $\widetilde{T_j} \triangleq 2^s p_{ij}$, and $\widetilde{T_j} \geq \frac{1}{2}T_j$. And we have

\begin{eqnarray}
E\{\sum_{(i,j)\in M} p_{ij} x_{ij} \vert s \}  \geq  E\{ 2^s \sum_{(i,j)\in M}p_{ij}x_{ij}\}   \nonumber \\
\geq  E\{ 2^s\sum_{(i,j)\in M}p_{ij}x^*_{ij} |s \}   = \sum_{(i,j)\in M_s} E\{ \widetilde{T}_j x^*_{ij} |s \}
\end{eqnarray}

To complete the proof of Theorem 1, note that $OPT = \sum_i \sum_j p_{ij} x_{ij}^* $. Here we get

\begin{eqnarray}
E\{\sum_{(i,j)\in M} p_{ij} x_{ij} \}  =   \sum_s Pr\{s\}E\{\sum_{(i,j)\in M}p_{ij}x_{ij} \vert s\} \nonumber \\
 \geq  \frac{1}{1+\lfloor \log(\mu) \rfloor}\sum_s\frac{1}{2}\sum_{(i,j)\in M_s} p_{ij} x^*_{ij} =  \frac{1}{2(1+\lfloor \log(\mu)\rfloor)}OPT
\end{eqnarray}
\end{proof} 

Therefore, we found that no algorithm can be constant-competitive, and we obtain logarithmically competitive ratio on a certain parameter of our problem through primal-dual algorithm.

\section{Experimental Evaluation}

\subsection{Experiment Setup}

In this subsection, we evaluate our proposed algorithms. We use both real and synthetic datasets for experiments.

{\bf Datasets.} We use the GeoLife dataset from \cite{zheng2009mining} as the real dataset. In the GeoLife dataset, there are 182 users, and we generate the number of parcels {200,400,600,800,1000}. For the synthetic data, we generate the working hours of workers following normal distributions, the capacity of workers following uniform distributions.  And the utility (i.e., money) of parcels for each worker are generated following (10,20) randomly. The statistics and configurations of the synthetic data are illustrated in Table~\ref{table.synthetic}. Default settings are denoted in bold font.

\doublerulesep 0.1pt
\begin{table}[t]
\begin{footnotesize}
\begin{center}
\caption{Synthetic dataset}
\begin{tabular}{|p{5cm}<{\centering}|p{5cm}<{\centering}|}
\hline\hline
Factor & setting \\
\hline
$|P|$ & 100,{\bf 200}, 300, 400, 500 \\
\hline
$|W|$ & 20, {\bf 40}, 60, 80, 100 \\
\hline
$|T|$ & ${\mu=5, \sigma = \{2,3,4,5,6\}}$  $ {\mu = \{2,3,4,5,6\}, \sigma = 5}$ \\
\hline
$|c|$ & [1,6] \\
\hline
Scalability ($|P|$) & $10k, 20k, 30k, 40k, 50k$ \\
\hline\hline
\end{tabular}
\label{table.synthetic}
\end{center}
\end{footnotesize}
\end{table}

We evaluated our algorithms in terms of allocation utility, running time and memory cost and study the effects of varying the parameters on the performance of the algorithms. The offline scenario is solved by simplex algorithm through min-cost flow model. And the synthetic datasets are created in Python, and all algorithms are implemented in C++ and executed under the Linux Ubuntu operating system. The experiments are conducted on a computer with an Intel Xeon E5620 with a 2.40 GHz 16-core CPU and 12 GB of memory.

\subsection{Experiment Results}
In this section, we evaluate the proposed algorithms in terms of allocation utility, running time and memory cost. We test the performances of the proposed algorithms by varying the parameters as follows: the size of $W$, the size of $P$, the capacity of worker $c$, and the working hours of workers $T$.

\begin{figure*}[t]
\centering

\subfloat[\small{Utility of varying $|W|$}\vspace{-2ex}]{
\includegraphics[scale=0.19]{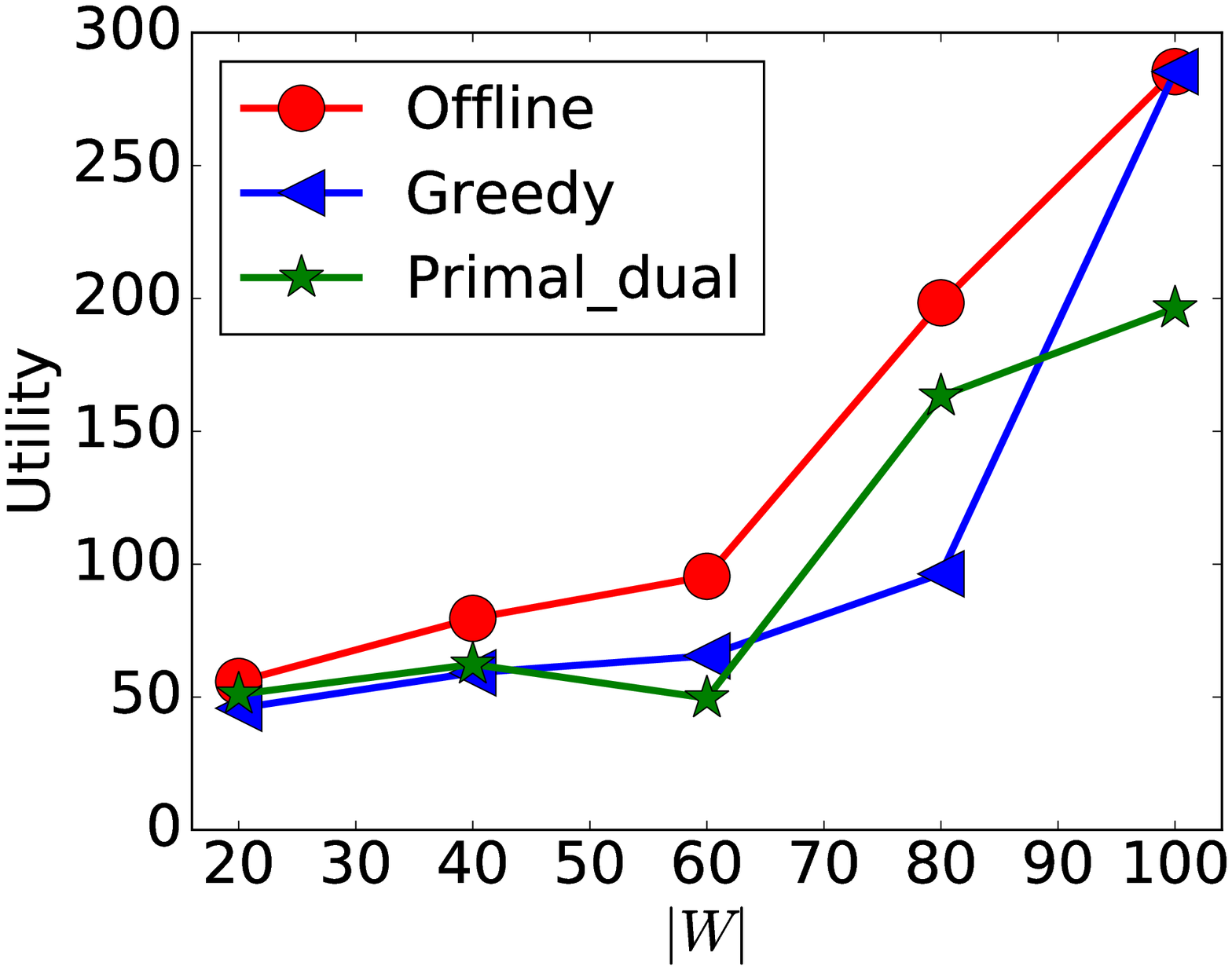}
\label{fig:w_cost}
}
~~
\subfloat[\small{Time of varying $|W|$}\vspace{-2ex}]{
\includegraphics[scale=0.19]{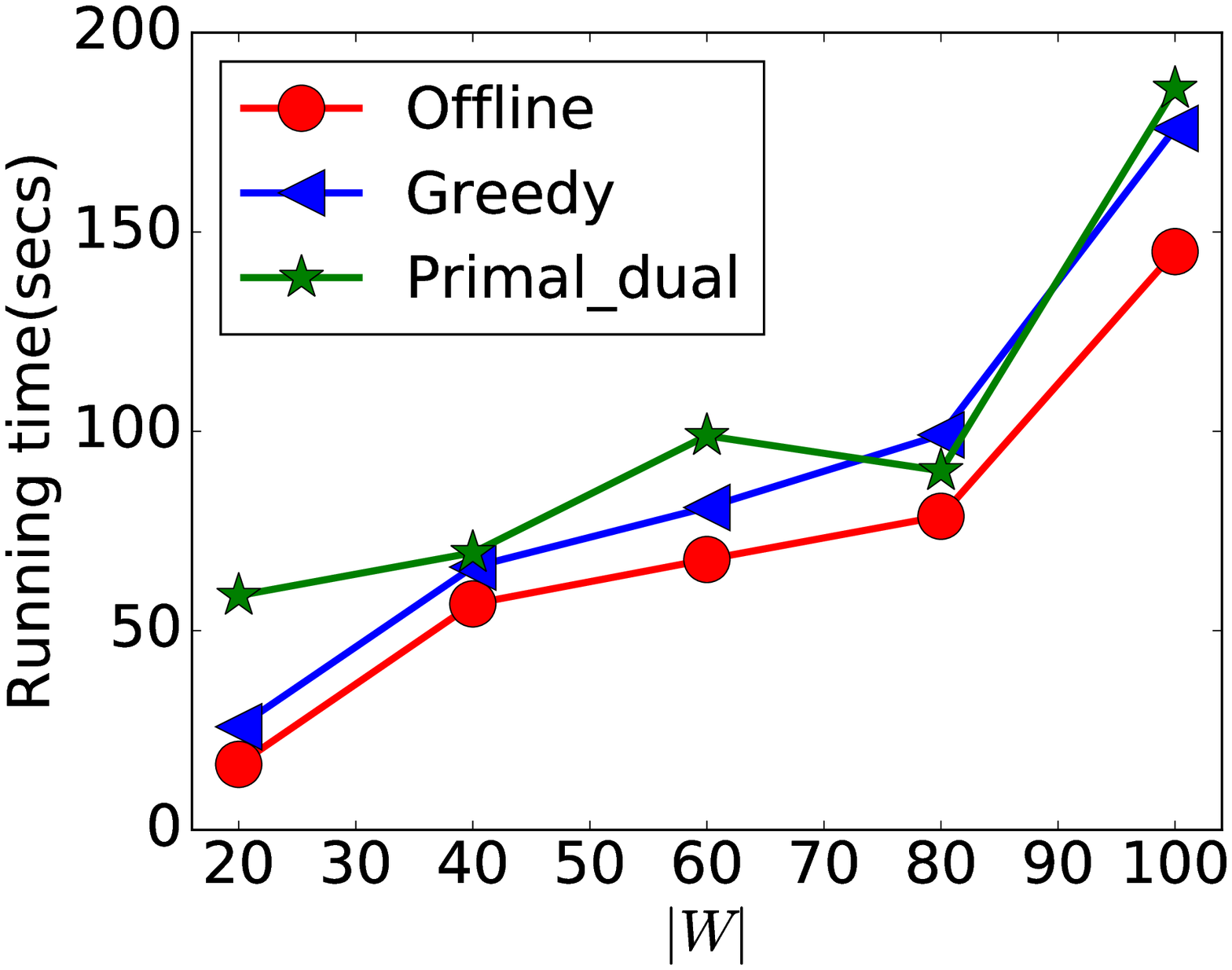}
\label{fig:w_time}
}
~~
\subfloat[\small{Memory of varying $|W|$}\vspace{-2ex}]{
\includegraphics[scale=0.19]{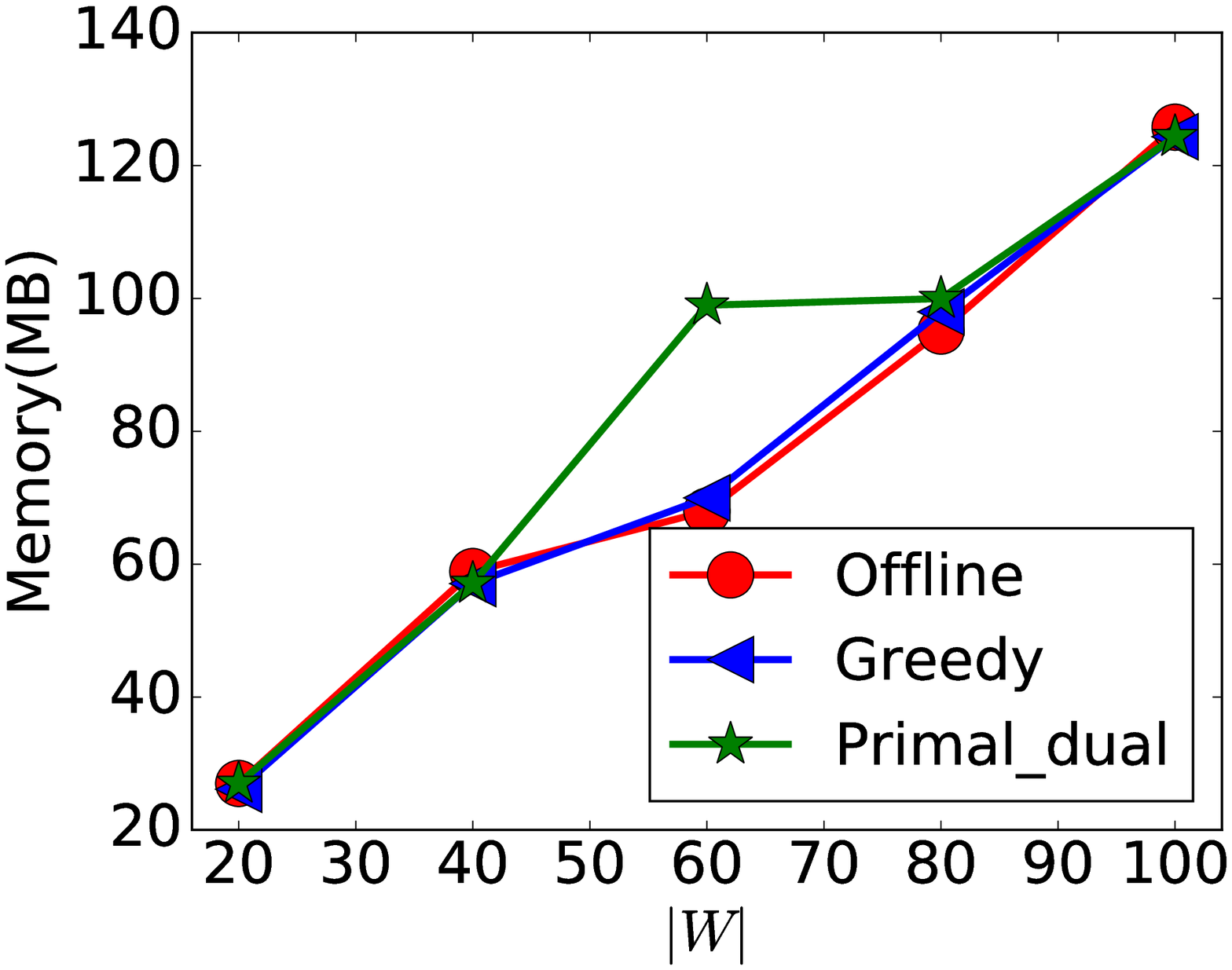}
\label{fig:w_memory}
}

\subfloat[\small{Utility of varying $|P|$}\vspace{-2ex}]{
\includegraphics[scale=0.19]{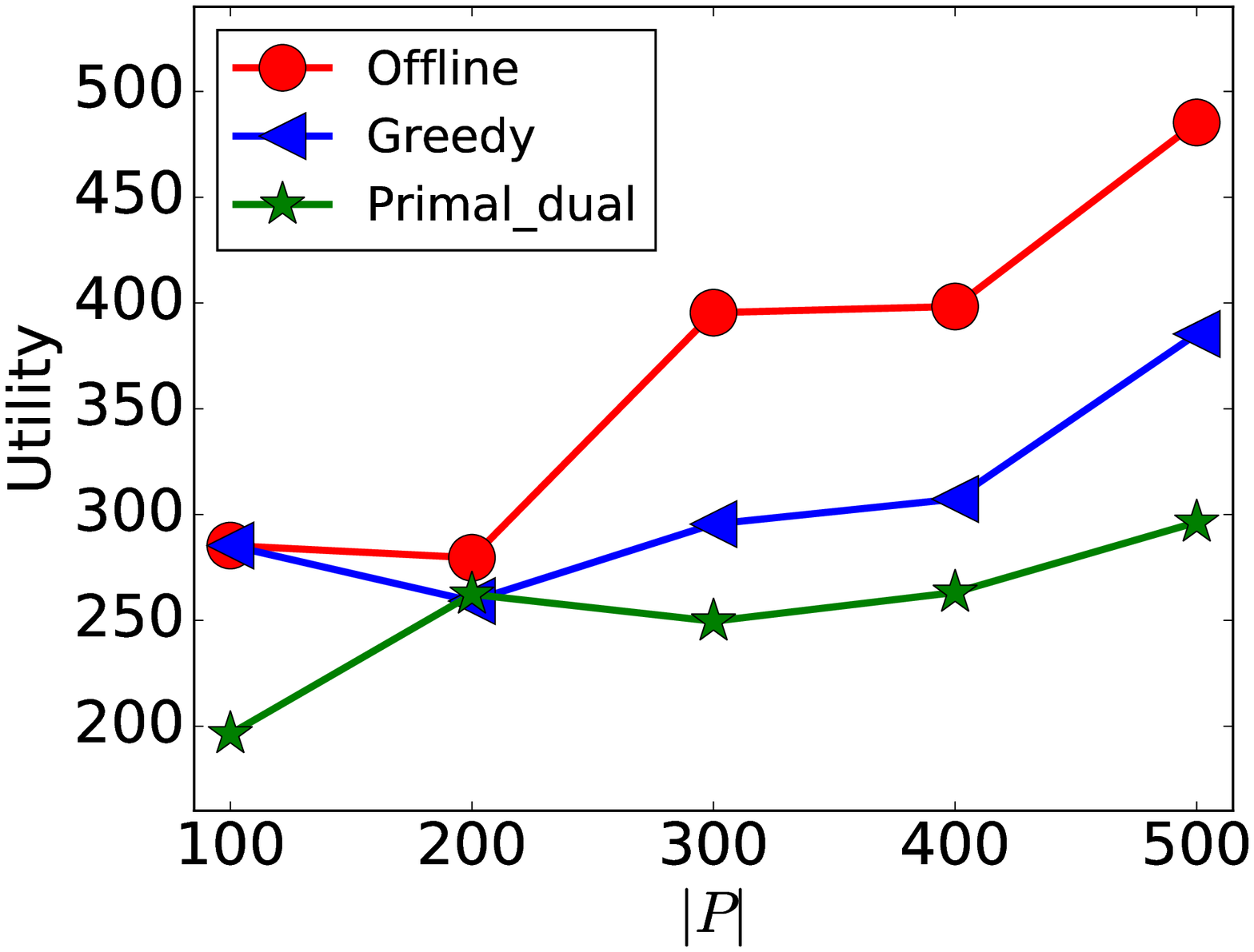}
\label{fig:p_cost}
}
~~
\subfloat[\small{Time of varying $|P|$}\vspace{-2ex}]{
\includegraphics[scale=0.19]{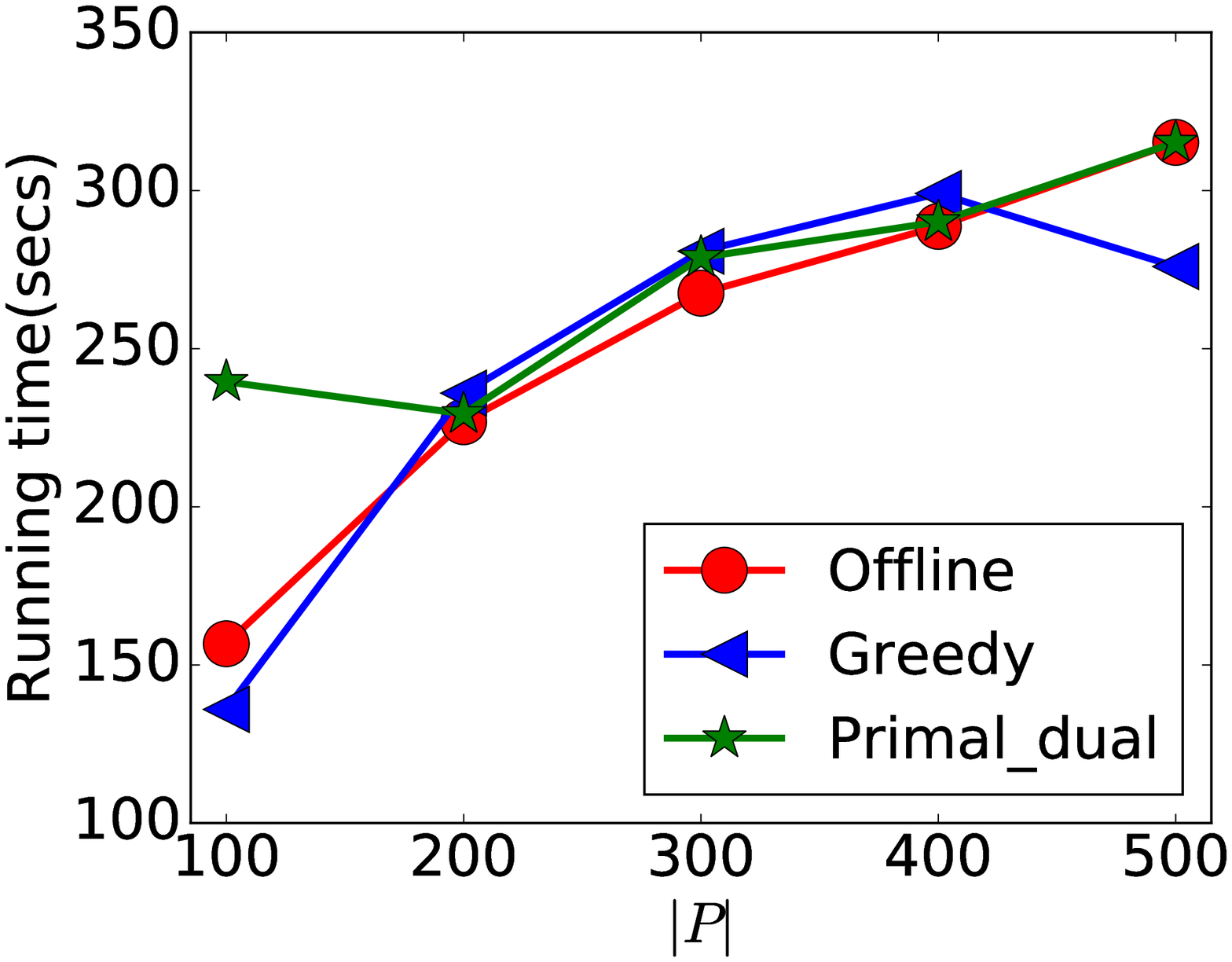}
\label{fig:p_time}
}
~~
\subfloat[\small{Memory of varying $|P|$}\vspace{-2ex}]{
\includegraphics[scale=0.19]{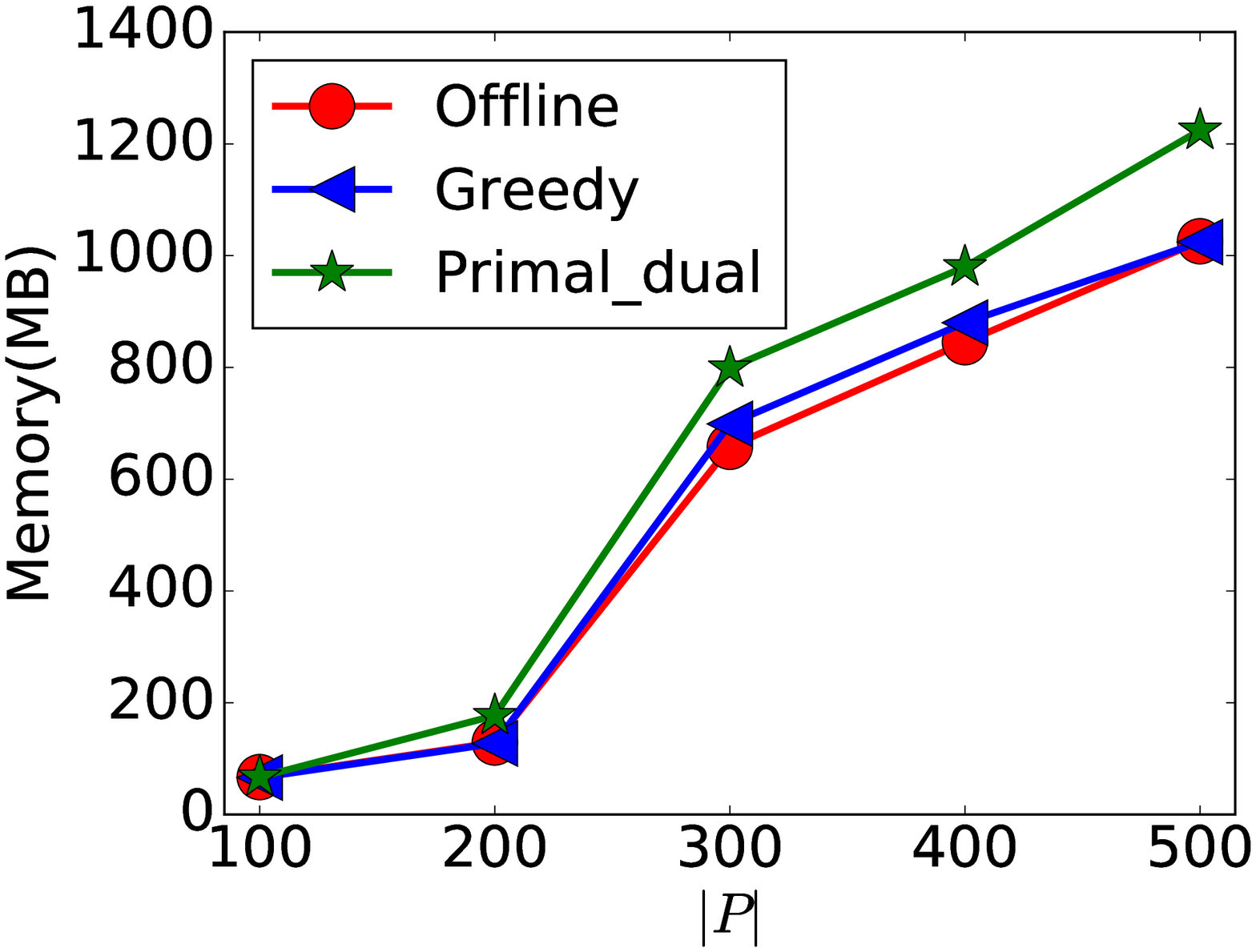}
\label{fig:p_memory}
}

\subfloat[\small{Utility of varying $|c|$}\vspace{-2ex}]{
\includegraphics[scale=0.2]{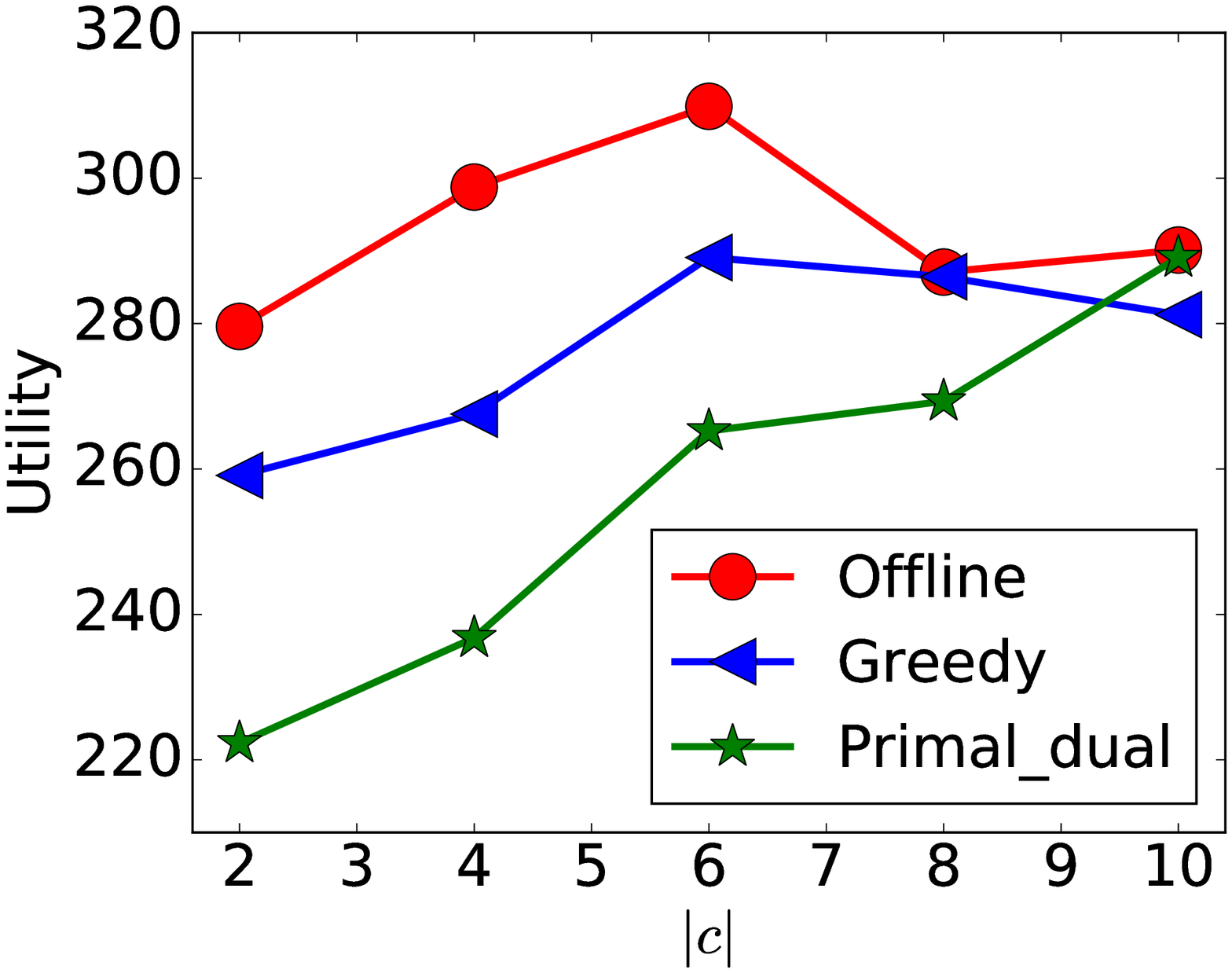}
\label{fig:c_cost}
}
~~
\subfloat[\small{Time of varying $|c|$}\vspace{-2ex}]{
\includegraphics[scale=0.19]{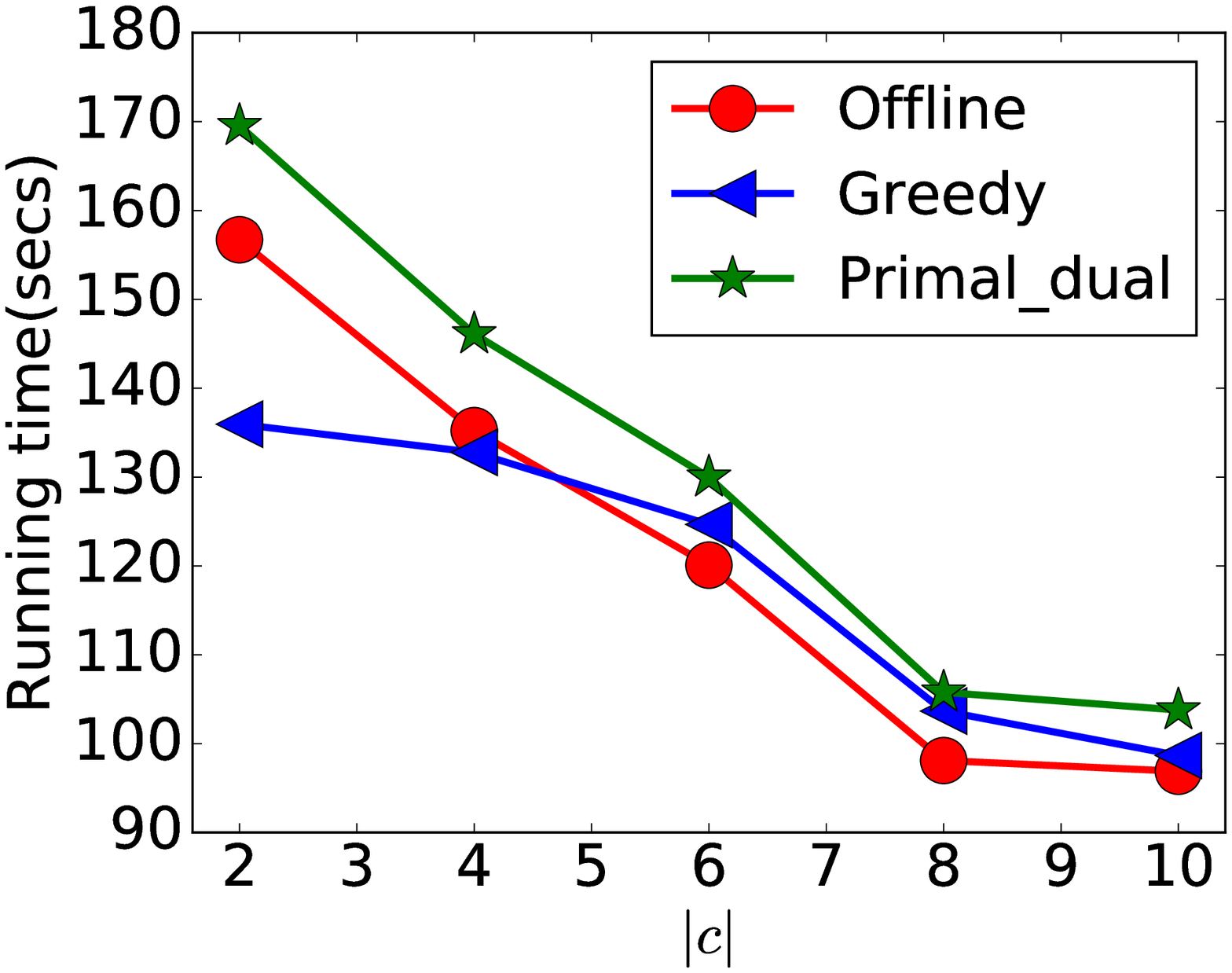}
\label{fig:c_time}
}
~~
\subfloat[\small{Memory of varying $|c|$}\vspace{-2ex}]{
\includegraphics[scale=0.19]{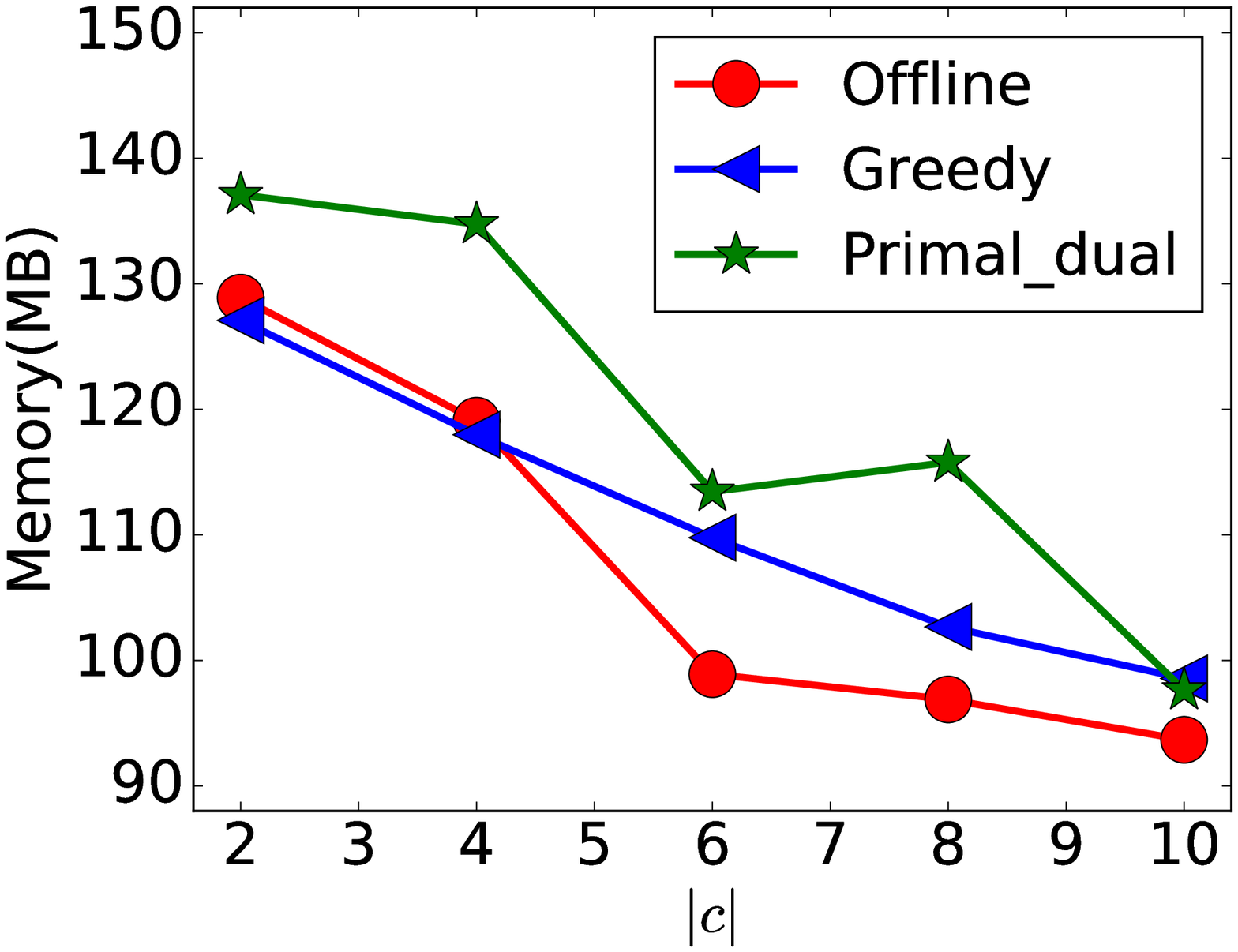}
\label{fig:c_memory}
}

\caption{Results of varying $|W|$, $|P|$, and $|c|$}
\label{fig:experiments}
\end{figure*}

{\bf Effect of $|W|$}. We first present the effects of varying $|W|$, where $|W|$ is set to \{20, 40, 60, 80, 100\}, and the number of parcels is 200.  Figs. \ref{fig:w_cost} to \ref{fig:w_memory} show the allocation utility, running time and memory cost, respectively. We have made the following observations. First, the allocation utility increases as $|W|$ increases. This is because when $|W|$ increases, the utility will be calculated among more workers. Second, the running time increases as $|W|$ varies, because as $|W|$ increases, algorithm will consume more time as the size of worker increases. Third, the memory usage increases as $|W|$ increases, which is natural, because the data becomes larger.

{\bf Effect of $|P|$}. Next, we set the number of parcels as \{100, 200, 300, 400, 500\}, and the the number of workers as 40. Figs. \ref{fig:p_cost} to \ref{fig:p_memory} show the allocation utility, running time and memory cost, respectively. We have made the following observations. First, the allocation utility increases when $|P|$ increases, because more parcels will be collected by workers. Second, the running time increases as $|P|$ increases. This is because when $|P|$ is larger and $|W|$ is fixed, one worker will take more parcels which results in more calculation time. Third, the memory cost increases as $|P|$ increases, because as $|P|$ increases, it requires more memory.

{\bf Effect of capacity $|c|$.} We set the number of parcels as 200, the number of workers as 40, and the capacity of workers as 1 to 6 random numbers, and we present the results of the allocation utility, running time and memory cost in Figs. \ref{fig:c_cost} to \ref{fig:c_memory}, respectively, when varying the capacity of the workers. As is shown, the allocation utility increase as $|c|$ increases, because an increase in $|c|$ workers will collect more parcels. Running time and memory cost decrease as worker's capacity increases.

\begin{figure*}[t]
\centering

\subfloat[\small{Utility of varying $\mu$}\vspace{-2ex}]{
\includegraphics[scale=0.19]{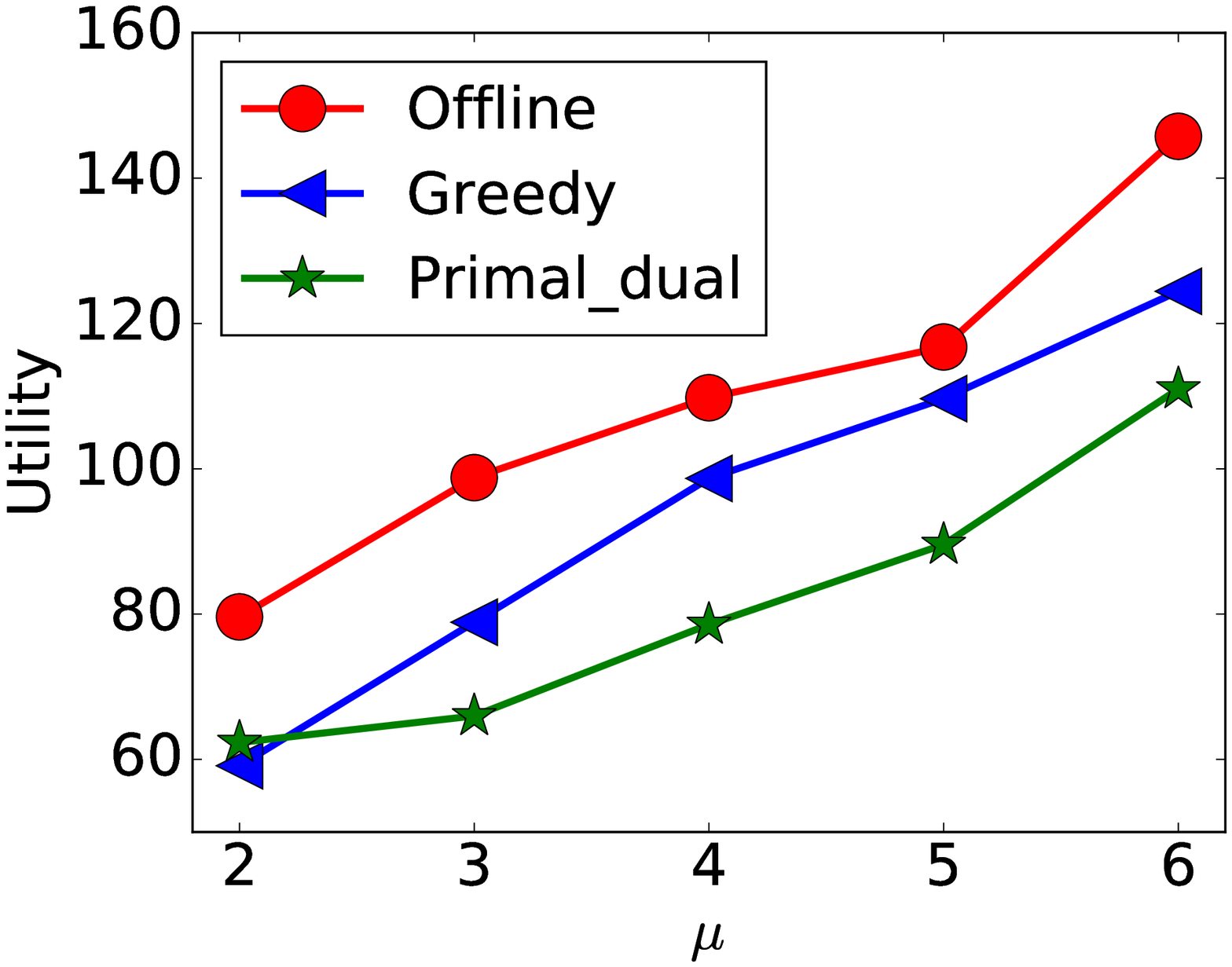}
\label{fig:mu_cost}
}
~~
\subfloat[\small{Time of varying $\mu$}\vspace{-2ex}]{
\includegraphics[scale=0.19]{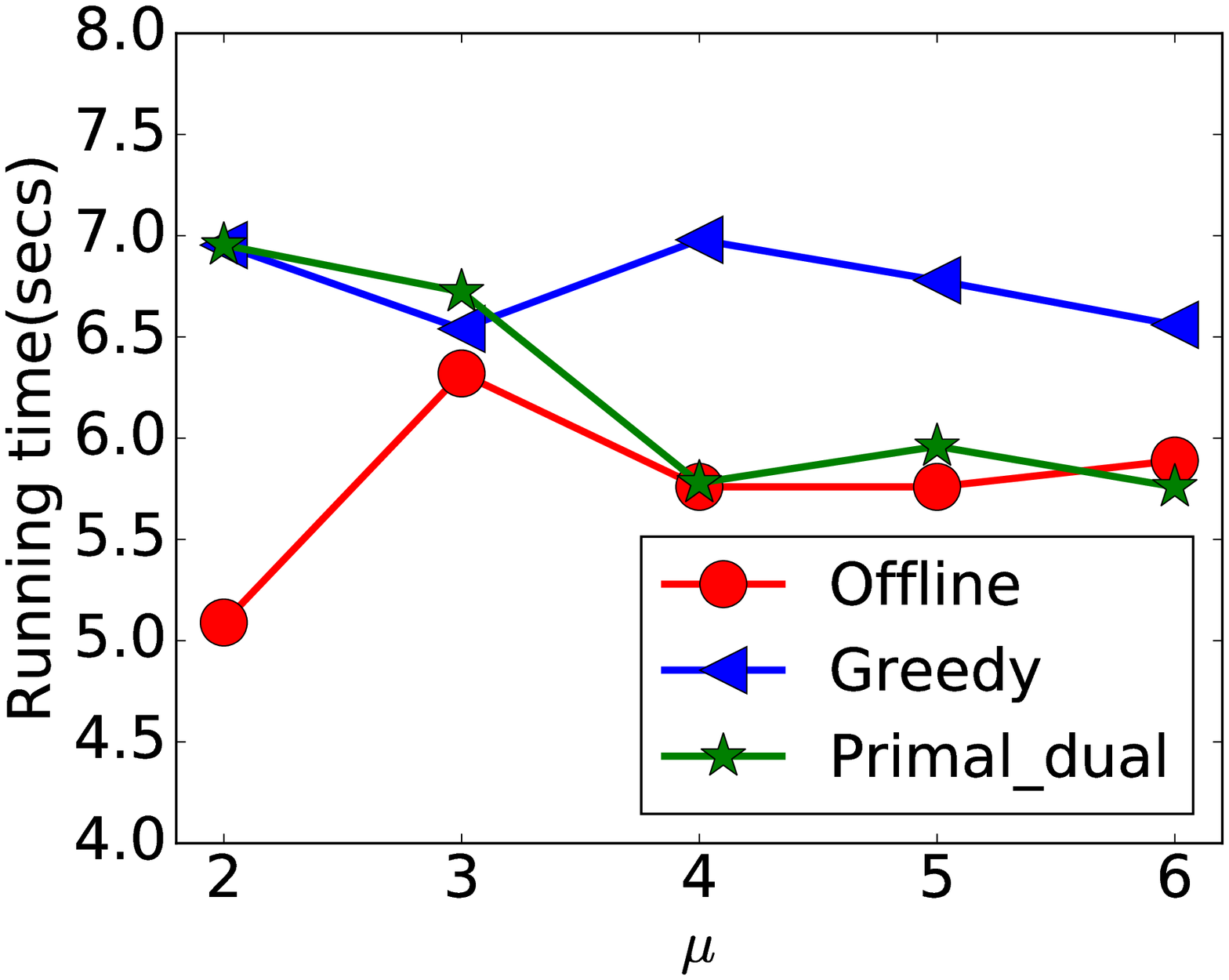}
\label{fig:mu_time}
}
~~
\subfloat[\small{Memory of varying $\mu$}\vspace{-2ex}]{
\includegraphics[scale=0.19]{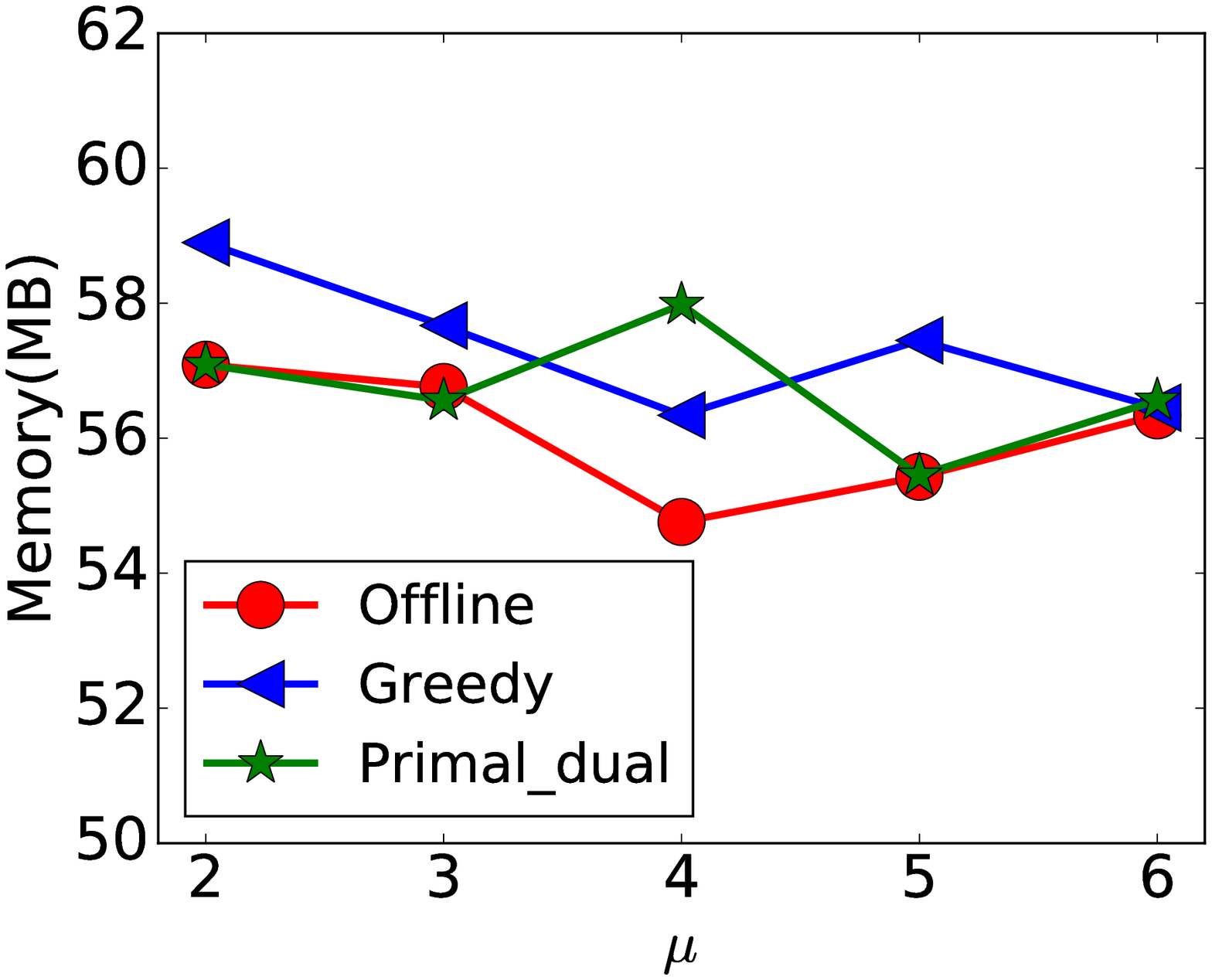}
\label{fig:mu_memory}
}

\subfloat[\small{Utility of varying $\sigma$}\vspace{-2ex}]{
\includegraphics[scale=0.19]{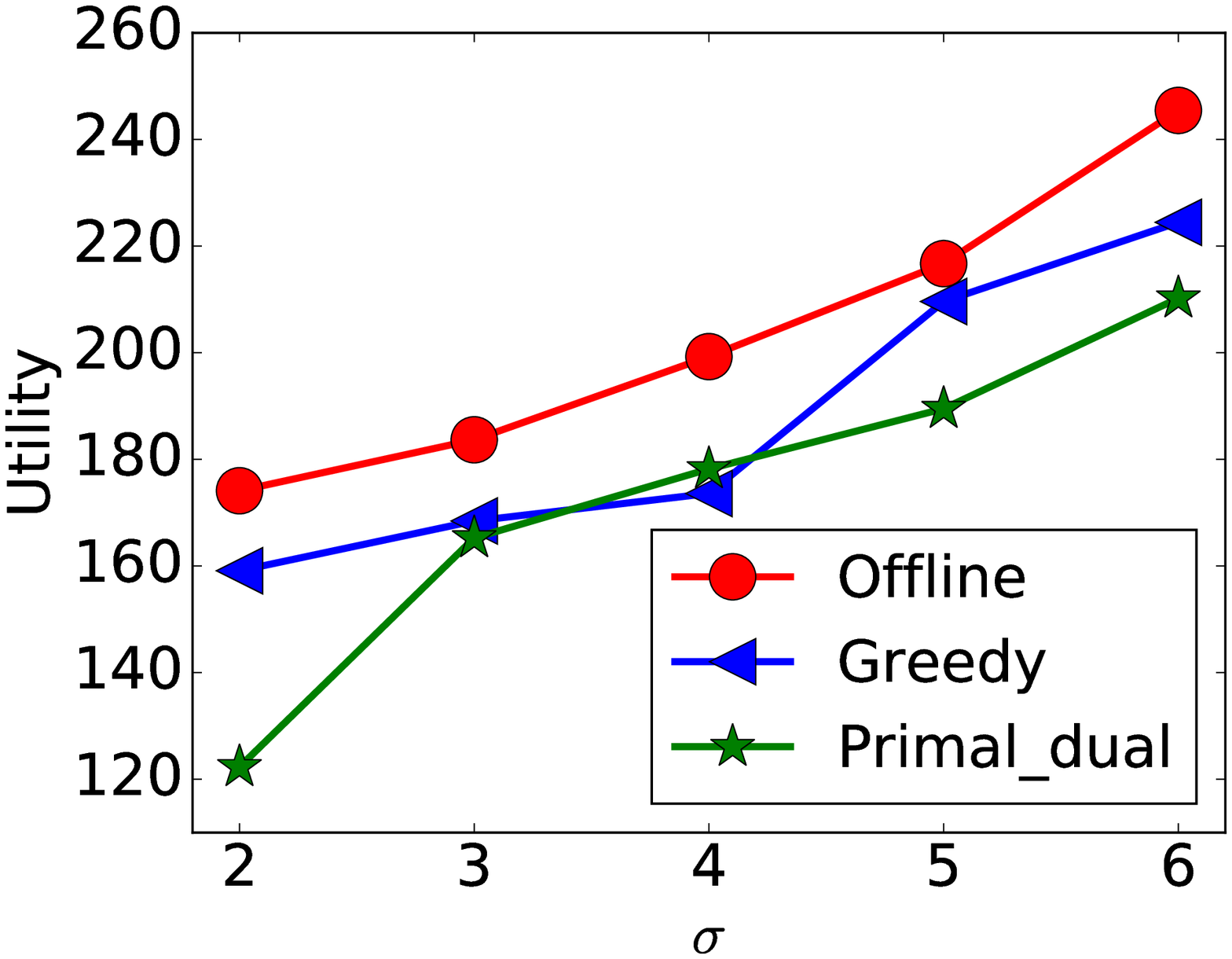}
\label{fig:sigma_cost}
}
~~
\subfloat[\small{Time of varying $\sigma$}\vspace{-2ex}]{
\includegraphics[scale=0.19]{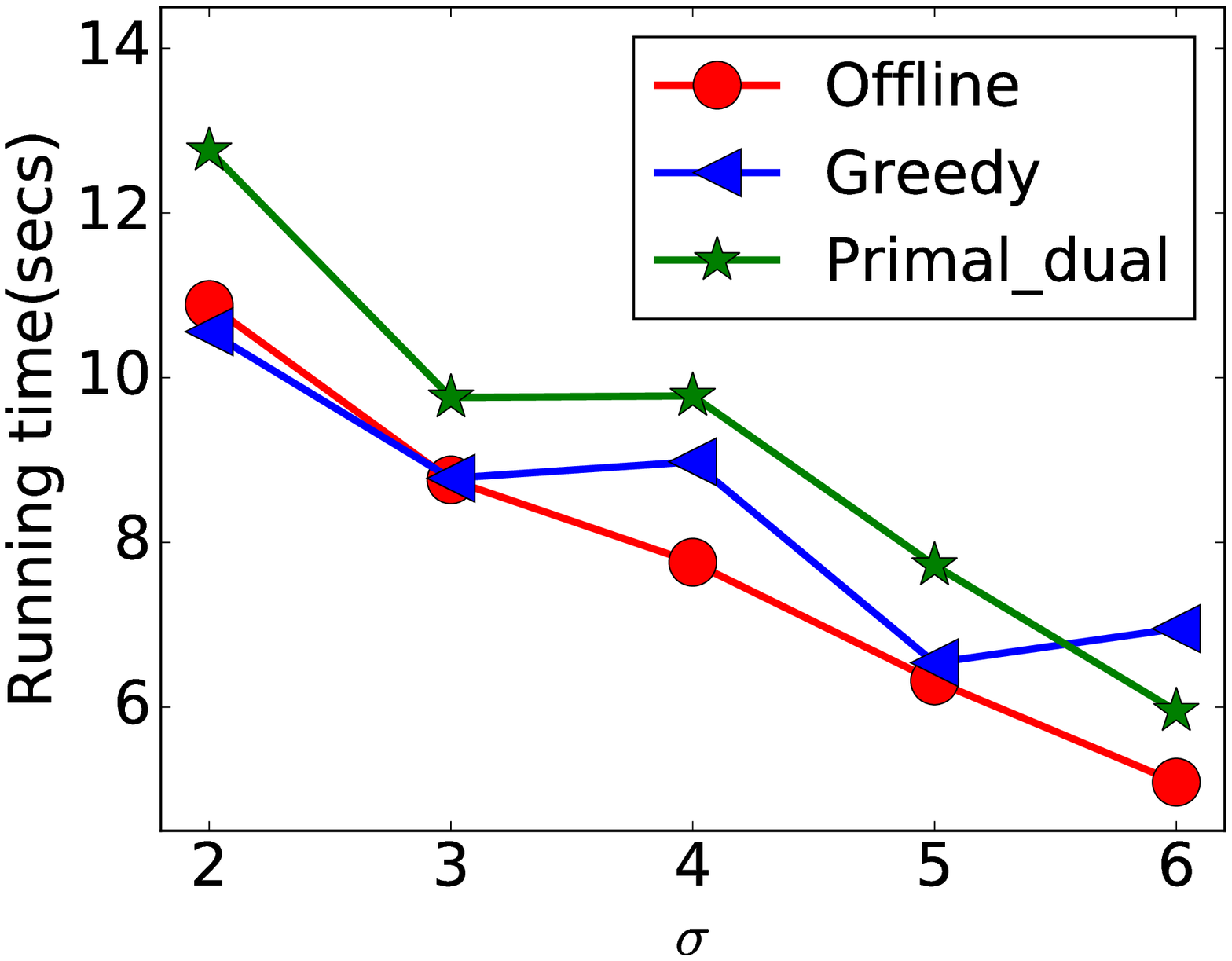}
\label{fig:sigma_time}
}
~~
\subfloat[\small{Memory of varying $\sigma$}\vspace{-2ex}]{
\includegraphics[scale=0.19]{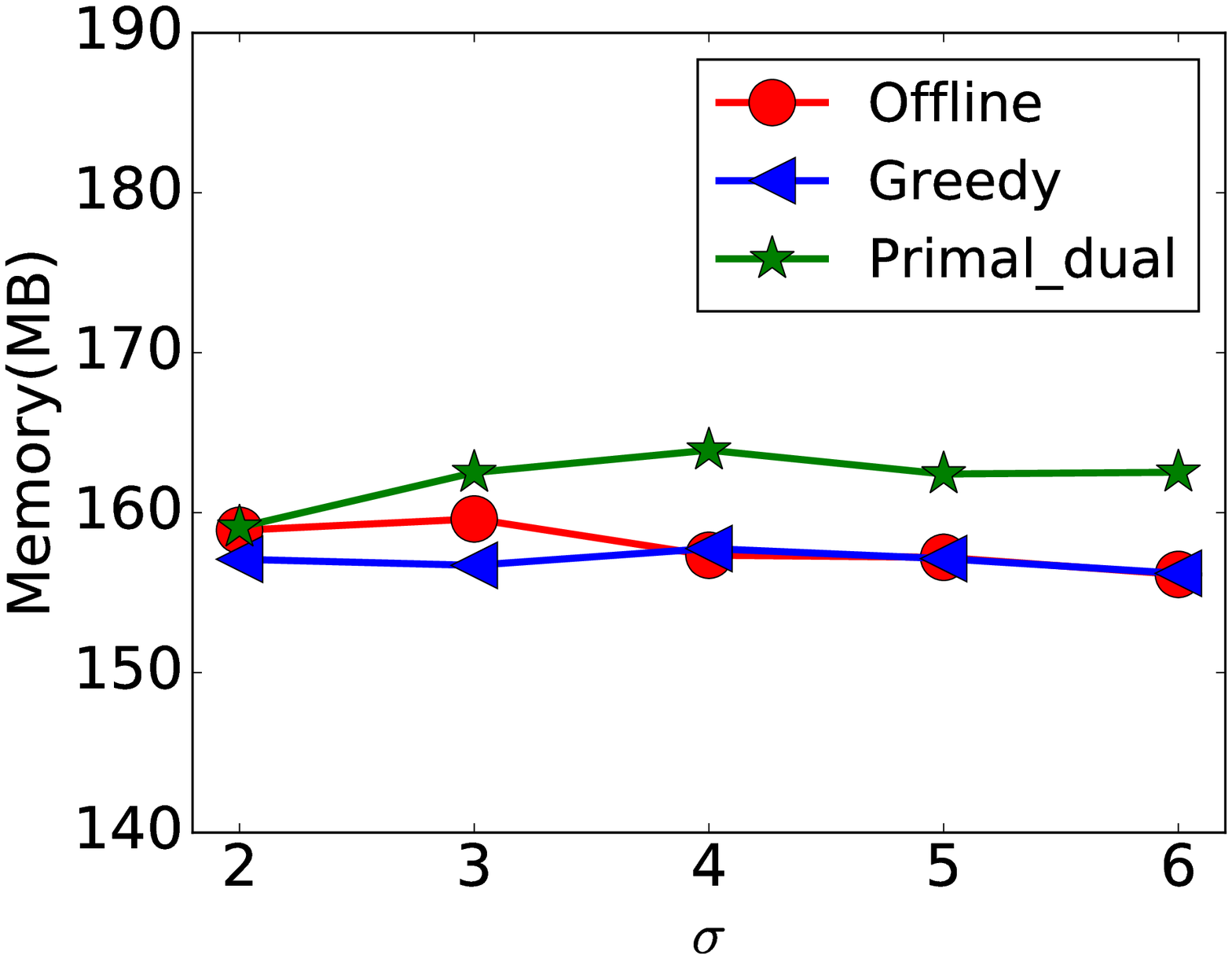}
\label{fig:sigma_memory}
}

\subfloat[\small{Utility on real datasets}\vspace{-2ex}]{
\includegraphics[scale=0.19]{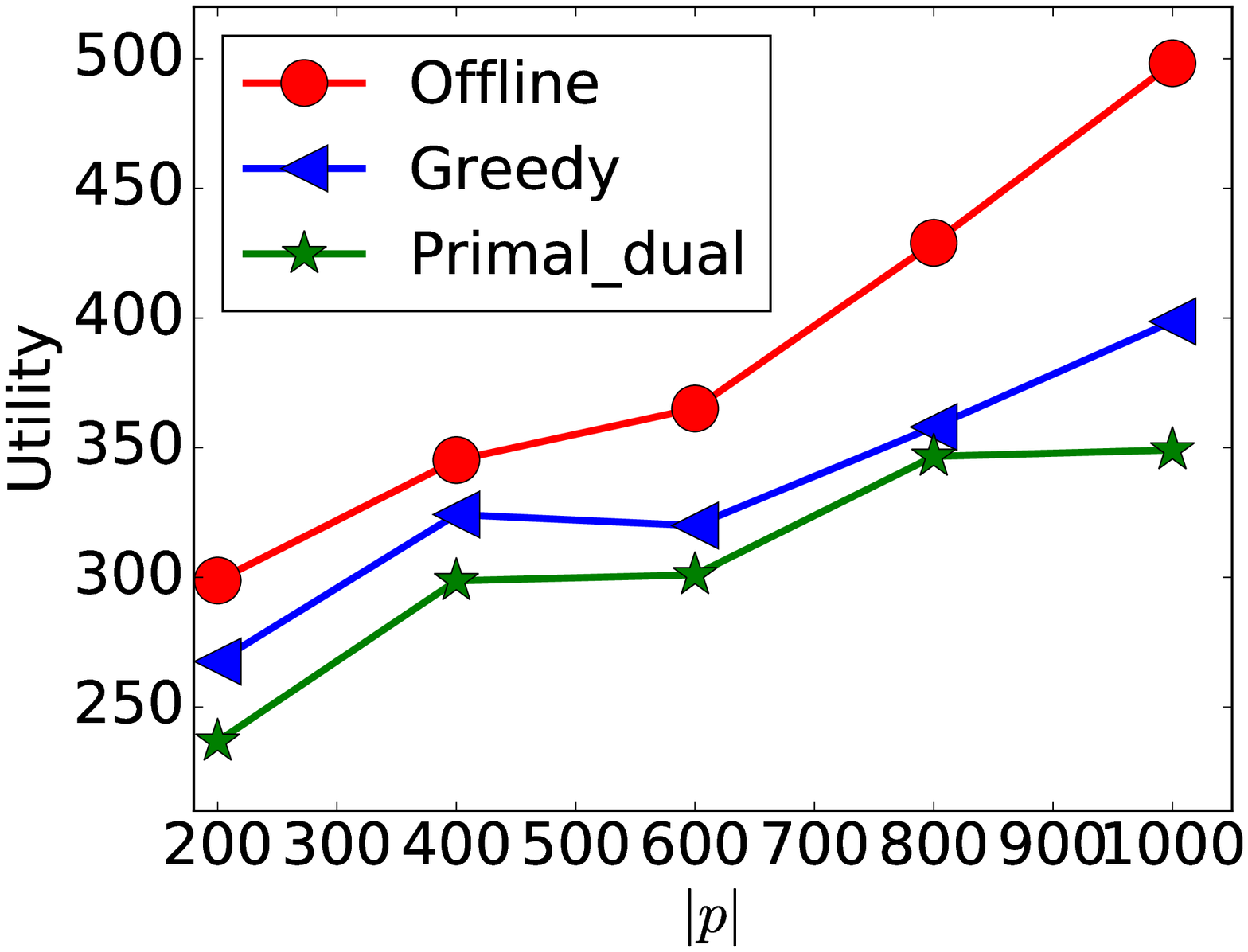}
\label{fig:r_cost}
}
~~
\subfloat[\small{Time on real datasets}\vspace{-2ex}]{
\includegraphics[scale=0.19]{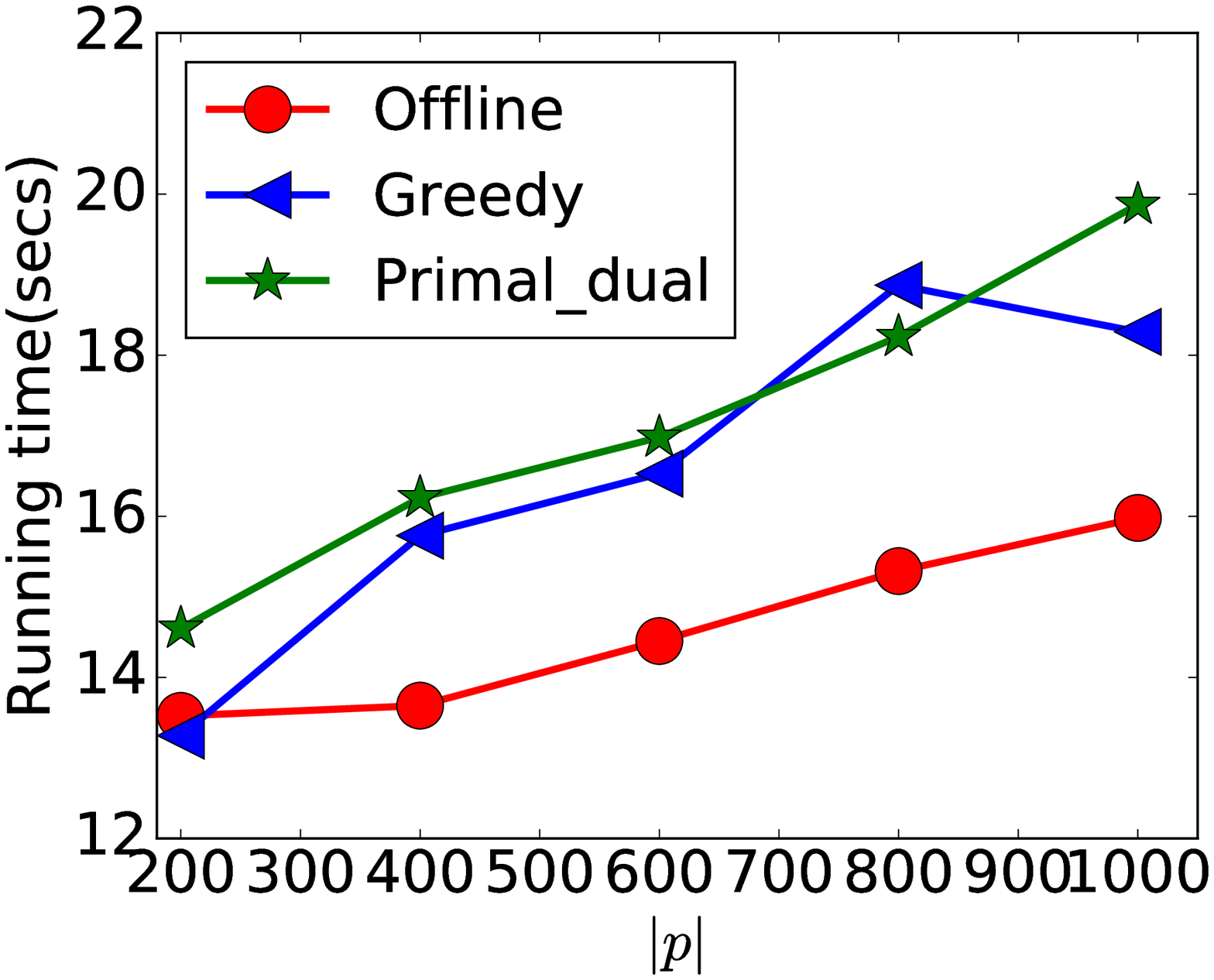}
\label{fig:r_time}
}
~~
\subfloat[\small{Memory on real dataset}\vspace{-2ex}]{
\includegraphics[scale=0.19]{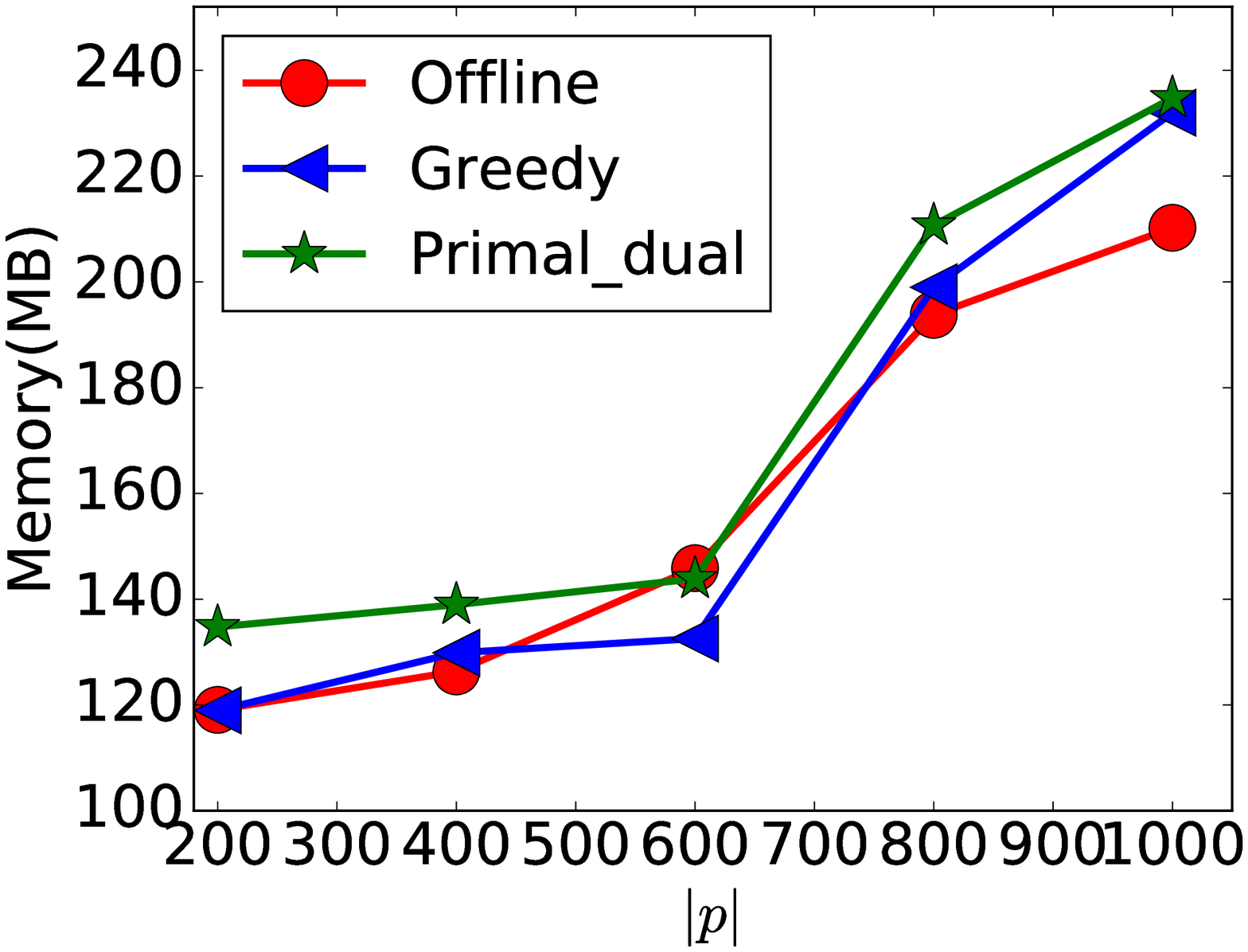}
\label{fig:r_memory}
}

\caption{Results varying working hours $|T|$ and real datasets.}
\label{fig:experiments1}
\end{figure*}

{\bf Effect of working hours $|T|$.} We set the number of parcels as 200, the number of workers as 40, and the working hours of workers following normal distribution $\mu = \{2,3,4,5,6\}, \sigma = 5$, and we present the results of the allocation utility, running time and memory cost in Figs. \ref{fig:mu_cost} to \ref{fig:mu_memory}, respectively, when varying the working hours of workers. As is shown, the allocation utility increases as $|T|$ increases since workers will have more time to collect parcels; running time and memory costs have not change much as $|T|$ increases since the number of workers and the number of parcel are fixed. In addition, we set the number of workers and the number of parcels, similar to above. And we set the working hours of workers following normal distribution $\mu = 5, \sigma = \{2,3,4,5,6\}$. Then we present the results of the allocation utility, running time and memory cost in Figs. \ref{fig:sigma_cost} to \ref{fig:sigma_memory}, respectively. As are shown, the allocation utility increase as $|T|$ increases, because an increase in $|T|$ requires more calculations. Running time decreases when varying the $\sigma$ of working hours since parcel allocation will need less worker in one day; memory cost has not change.

{\bf Real dataset.} Figs. \ref{fig:r_cost} to \ref{fig:r_memory} show the results of the allocation utility, running time and memory cost, respectively, on the real dataset \cite{zheng2009mining}. The results on this real dataset present patterns similar to the results with synthetic data.

\begin{figure*}[t]
\centering
\subfloat[\small{Utility of scalability test}\vspace{-2ex}]{
\includegraphics[scale=0.19]{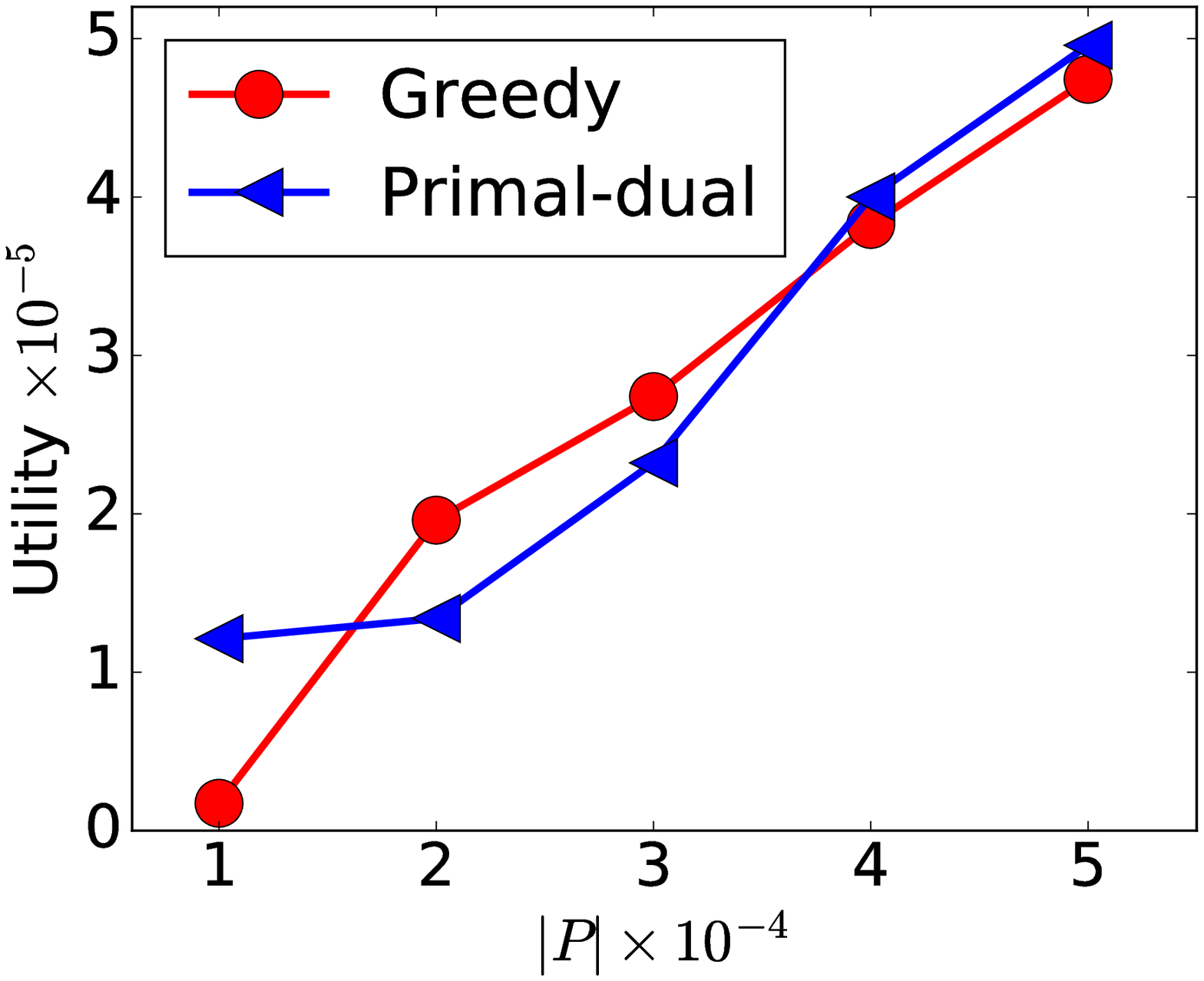}
\label{fig:s_cost}
}
~~
\subfloat[\small{Time of scalability test}\vspace{-2ex}]{
\includegraphics[scale=0.19]{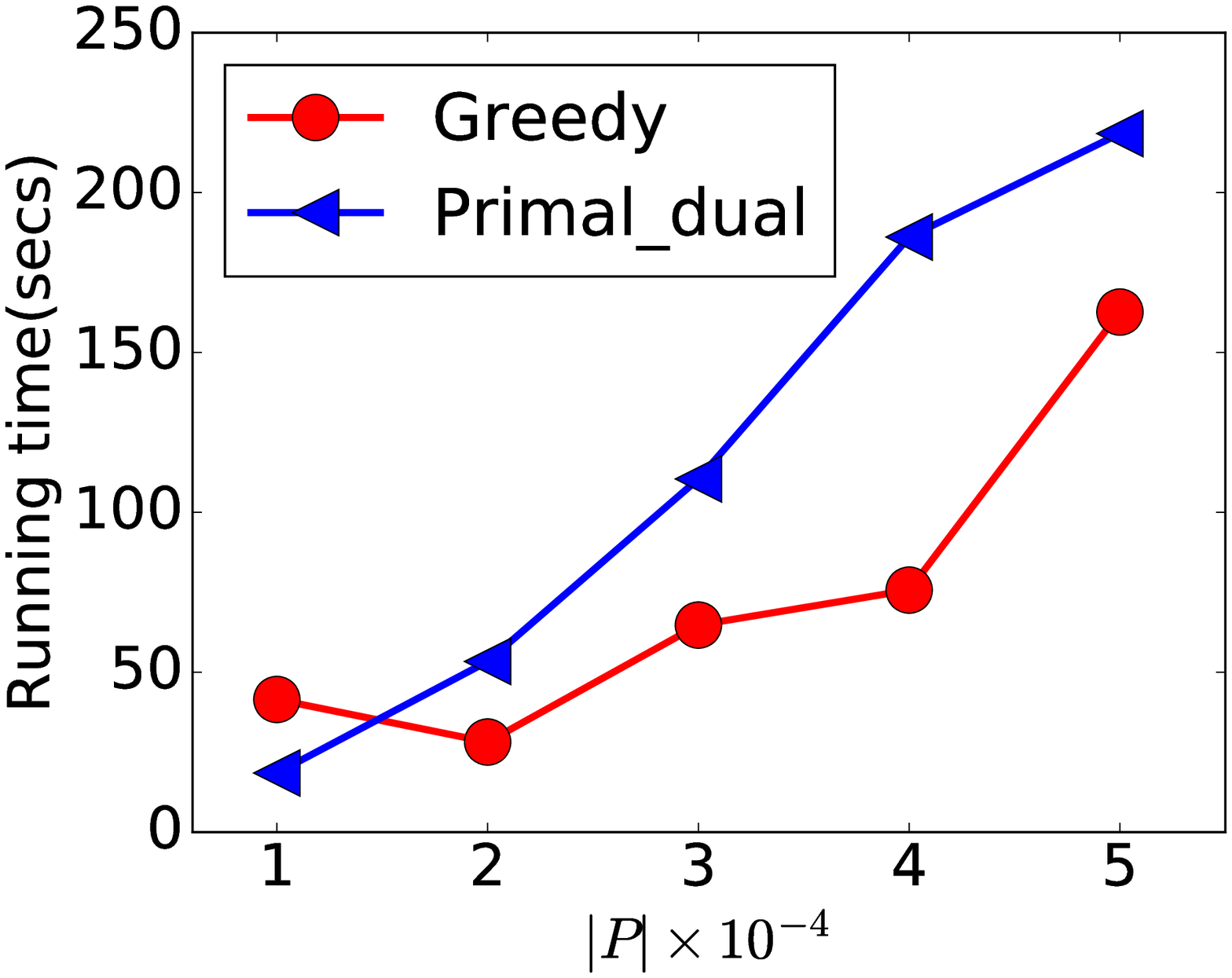}
\label{fig:s_time}
}
~~
\subfloat[\small{Memory of scalability test}\vspace{-2ex}]{
\includegraphics[scale=0.19]{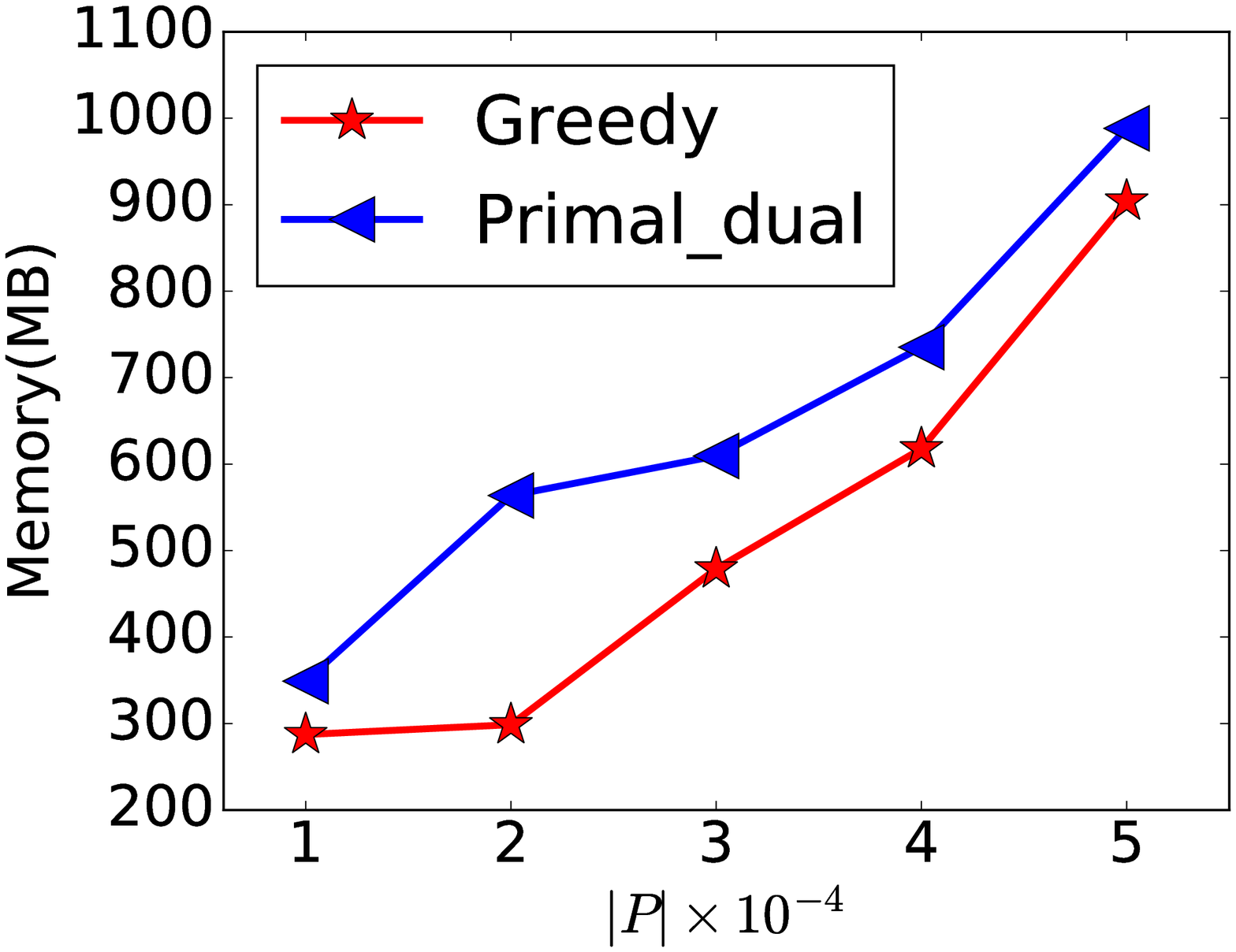}
\label{fig:s_memory}
}

\caption{Results on scalability test.}
\label{fig:experiments2}
\end{figure*}

{\bf Scalability.} Finally, we study the scalability of all the proposed algorithms. Specifically, we set $|W|$ to 200 and $|P|$ to $10k$, $20k$, $30k$, $40k$, and $50k$. The results are shown in Figs. \ref{fig:s_cost} to \ref{fig:s_memory} in terms of allocation utility, running time and memory cost, respectively. We can observe that the allocation utility, running time, and memory cost of all algorithms grow linearly with the size of the data. In addition, the result shows that all the algorithms are scalable in terms of both time and memory cost.

{\bf Summary.} The Greedy and Primal-dual algorithms are efficient for parcel allocation in last-mile delivery. The Greedy algorithm performs better than Primal-dual in allocation utility, and it is more efficient with regard to running time in more cases since Greedy do not need more time to calculate the value of dual variants.

\section{Conclusion}

In this paper, in order to solve online parcel allocation problem in last-mile delivery, we first identify a model for online parcel allocation in last-mile delivery. Then, we propose a baseline algorithm to solve this problem. We also present a primal-dual algorithm whose competitive ratio depends logarithmically on a certain parameter of the problem instance. Finally, we verify the effectiveness and efficiency of the proposed solutions through extensive experiments on both real and synthetic datasets. The experiments show promising results for our proposed algorithms.

\bibliographystyle{splncs}
\bibliography{last_mile}

\end{document}